\documentclass[twocolumn]{autart}    % Enable this line and disable the 
\let\theoremstyle\relax

\usepackage{graphicx}     

\usepackage{amsmath, amssymb,bbm,amsthm,bbold}
\usepackage{enumitem}
\usepackage{xcolor}
\usepackage{comment}
\usepackage{natbib}
\usepackage{algorithm}
\usepackage{wrapfig}

\usepackage{algpseudocode}
\usepackage{hyperref}
\hypersetup{
    colorlinks=true,
    linkcolor=blue,
    urlcolor=blue,
    citecolor=blue
}
\usepackage[english]{babel}
\usepackage{epstopdf}
\usepackage{tikz}
\usepackage{chngcntr}
\usepackage{apptools}
\usepackage{calrsfs}  % molto riccioluto ma BELLO

\newtheorem{theorem}{Theorem}
\newtheorem{remark}{Remark}

\newtheorem{lemma}[theorem]{Lemma}

\newtheorem{proposition}[theorem]{Proposition}
\newtheorem{corollary}[theorem]{Corollary}

\newtheorem{assumption}{Assumption}

\allowdisplaybreaks %per dividere in maniera automatica le equazioni

\begin{document}
%%%%%%%%%%%%%%%%%%%%%%%%%%%%%%%%%%%%%%%%%%%%%%%%%%%%%%%%%%%%%%%%%%%%%%%%%%%%%%%%%%
\begin{frontmatter}
\title{
Distributed Observer Design for Discrete-Time LTI Systems via Jordan Canonical Form}

\author[unipd]{Giulio Fattore}\ead{fattoregiu@dei.unipd.it},    % Add the 
\author[unipd]{Maria Elena Valcher}\ead{meme@dei.unipd.it},  
\author[SKLSAPI]{Rui Gao}\ead{gaorui@ise.neu.edu.cn},  
\author[SKLSAPI]{Guang-Hong Yang}\ead{yangguanghong@ise.neu.edu.cn.},

   \address[unipd]{Department of Information Engineering, University of Padova, Via Gradenigo 6/B, 35131 Padova, Italy} 
 \address[SKLSAPI]{State Key Laboratory of Synthetical Automation for Process Industries, Northeastern University, Shenyang 110819, China}

\begin{keyword}                 
Distributed state observers, consensus algorithms, Jordan form, directed and undirected communication graphs.
\end{keyword}

\begin{abstract}
This paper addresses the problem of distributed state estimation for discrete-time linear time-invariant systems. Building on the framework proposed in \cite{GaoYang}, we exploit the Jordan canonical form of the system matrix to develop two distributed estimation schemes that ensure asymptotic convergence of local estimates to the true system state. In both approaches, each node reconstructs the components of the state that are locally detectable for it via a Luenberger observer, while employing a consensus-based mechanism to estimate the components that are not directly detectable.

The first scheme relies on local observers whose dimension matches that of the original state vector; however, its applicability requires the satisfaction of a large set of inequalities. The second scheme, in contrast, can be implemented under less restrictive conditions, but results in observers of increased (augmented) order.
For both methods, we derive necessary and sufficient conditions - expressed in terms of the eigenvalues of the system matrix and certain submatrices of the communication network Laplacian - that guarantee the existence of a distributed observer achieving asymptotically accurate estimation. Compared to  \cite{GaoYang}, the proposed approaches offer greater flexibility in the selection of coupling gains and impose less stringent solvability conditions.
\end{abstract}
\end{frontmatter}

The last decades have witnessed a significant  interest in distributed state estimation, largely driven by the limitations of centralized architectures in modern large-scale and networked systems. In many applications, such as sensor networks, cyber-physical systems, and power grids, collecting all measurements at a single location is often impractical due to communication constraints, energy limitations, and scalability issues. Distributed approaches overcome these challenges by allowing agents/nodes/sensors to process data locally and cooperate through limited information exchange. 
In the typical distributed estimation set-up, a dynamical system is observed by a network of sensors, each having access only to partial measurements of the system state and, in some cases, limited knowledge of the control inputs. As individual sensors are typically incapable of reconstructing the full state independently, they must cooperate by exchanging information with neighboring nodes. This interaction is commonly implemented through consensus-based strategies, enabling the network to progressively refine local estimates and achieve an accurate reconstruction of the global system state.

A wide range of distributed state estimation algorithms have been developed over the years. In the stochastic setting, several approaches are based on variants of the Kalman filter \cite{Tomlin,OS_distrKalman2,OS_distrKalman}, while in the deterministic setting, Luenberger-type observers are commonly employed. The continuous time case has been extensively studied under various assumptions on network connectivity and joint detectability \cite{Han2019,yang2022state}. However, the problem in discrete-time is significantly more challenging, {\color{black} as 
 high-gain consensus mechanisms adopted for   continuous-time systems do not directly extend to discrete-time ones} \cite{Wang24}.

Distributed state estimation for discrete-time autonomous systems has been investigated by Park and Martins in a series of papers \cite{Nuno1, Nuno2, Nuno3}   using networks of Luenberger-like observers. In their framework, each observer has an augmented  state, processes a subset of the system outputs, and updates its estimate using both local measurements and information received from neighboring nodes. The use of an augmented state allows to recast  the observer design problem as a decentralized dynamic output feedback control problem, allowing the use of classical control-theoretic tools. This approach has been further extended in \cite{Disaro_TAC2025} to systems affected by unknown disturbances and partial input measurements.
More recently, \cite{Wang24} proposed a unified framework for distributed state estimation applicable to both continuous- and discrete-time linear systems, including cases with time-varying communication graphs. The approach relies on partitioning the system spectrum into subsets associated with observable and unobservable subspaces. The estimation scheme then employs high-gain techniques in continuous time, while in the discrete-time case it resorts to two appropriately chosen sampling rates.

 The paper \cite{MitraSundaram} explores two distinct strategies for distributed observer design.  The first one assumes detectability within each source component of the communication graph and introduces the concept of {\em multisensor observable canonical decomposition}. This decomposition transforms the system and (overall) output matrices into a lower block-triangular form, where each diagonal block pair corresponds to an observable subsystem, thus enabling each agent to autonomously estimate a specific portion of the state.
The second strategy in \cite{MitraSundaram} assumes that, for every unstable eigenvalue of the system matrix, there exists at least one root node in each source component for which that eigenvalue is observable. Based on this assumption, a coordinated Jordan canonical transformation - arguably the first use of such a decomposition in distributed observer design - is applied, followed by a local transformation into Kalman detectability form. The resulting estimation scheme relies on observers of augmented order. In both approaches, the edge weights are treated as design parameters rather than network constraints. This framework was later extended in \cite{MitraSundaram2019} to account for adversarial agents, enhancing robustness against compromised nodes.
In \cite{FioravantiTAC2024}, a distributed state estimation scheme based on a Luenberger-type observer is proposed, which guarantees exact state reconstruction in a finite number of steps. A key advantage of this approach is that it enables agents to design the observer gains in a fully distributed manner, {\color{black} but this} comes at the cost of stronger assumptions, i.e., the joint observability of the sensor network and the requirement that, at each time step, the sensor nodes have access to the average of the state estimates across the entire network.

Building on the use of Jordan canonical forms, %Gao and Yang 
\cite{GaoYang} proposed a distributed observer design in which each node applies a local (and hence fully distributed) permutation that brings the system into Kalman detectability form while preserving the Jordan structure. Each agent employs a Luenberger-type observer to estimate locally detectable state components and uses a consensus-based mechanism, with a common coupling gain, to reconstruct the remaining components. This formulation leads to necessary and sufficient conditions - expressed in terms of the eigenvalues of the system matrix and the communication graph - for the solvability of the distributed estimation problem, without modifying the network topology or edge weights.
For a recent survey  of distributed observer design methods for linear time-invariant systems, either of deterministic or of  stochastic nature, the reader is referred to \cite{ReviewDIstrObs,Rego19}.
%\smallskip

Building upon the results in \cite{GaoYang}, this paper advances the state of the art in distributed state estimation by proposing two design methodologies based on the Jordan canonical form of the system matrix. The first approach employs local observers whose dimension matches that of the original state vector; however, its applicability requires relatively stringent conditions relating the eigenvalues of the system matrix to those of the communication network Laplacian. The second approach, by contrast, can be implemented under less restrictive conditions on the same spectral quantities, at the expense of increasing the observer order through state augmentation. For both methods, we establish necessary and sufficient conditions that guarantee the existence of a distributed observer achieving asymptotically accurate state estimation.

Compared with the design strategy in \cite{GaoYang}, the proposed methods do not impose constraints on each individual component of the estimation error. Instead, they preserve the structural partitioning induced by the Jordan decomposition, thereby significantly reducing the number of conditions that must be verified. Furthermore, rather than relying on a single coupling gain for all state components, the proposed framework introduces distinct gains for each Jordan miniblock, leading to weaker feasibility requirements and, consequently, less restrictive solvability conditions. A detailed comparison with \cite{GaoYang} is provided in Section \ref{sec.comparison}.
%\smallskip

The paper is organized as follows. Section \ref{sec.pb.setup} formalizes the problem setup.
In section \ref{Sec.Reordering}  a preliminary reordering of the Jordan miniblocks corresponding to unstable eigenvalues (and of the corresponding state entries) is performed for each single agent, with the goal of
identifying  the blocks of the state vector that are completely observable, partially observable, and unobservable for that agent. Sections \ref{sec.strategy1} and \ref{sec.strategy2} present the first and the second approach, respectively, including necessary and sufficient conditions for problem solvability. 
Section \ref{sec.comparison} compares the two strategy to each other and to the original solution proposed in \cite{GaoYang}.
 Section \ref{sec.example} provides an illustrative example that demonstrates the effectiveness of the proposed approach. Finally, Section \ref{sec.conclusion} draws the main conclusions.
This paper builds upon the conference paper \cite{Fattore_ECC2026}, with
which it shares the content of Sections \ref{sec.pb.setup}, \ref{Sec.Reordering}, \ref{sec.strategy2} and part of Section \ref{sec.comparison}. Sections \ref{sec.strategy1}, \ref{sec.und}, the illustrative example in Section \ref{sec.example} and the Appendix are original.
%\smallskip

{\bf Notation.} 
The sets of real numbers, nonnegative integers, and complex numbers are denoted by $\mathbb{R}$, $\mathbb{Z}_+$, and $\mathbb{C}$, respectively.
Given two integers $h$ and $k$, with $h \le k$, we let $[h,k]$ denote the set $\{h,h+1,\dots,k\}$.
The symbols $I$ and $\mathbb{0}$ denote the identity matrix and   the zero matrix/vector of suitable dimensions, respectively.  The $i$th {\em canonical vector} whose entries are all zeros except for the $i$th, which is equal to $1$, is denoted by $\mathbb{e}_{i}$ and its dimension can be deduced from the context. 
A square matrix $P\in {\mathbb R}^{n \times n}$ obtained from the identity $I_n$ via row-column permutations 
is called {\em permutation matrix}.
Given any matrix $Q$, its $(i,j)$th entry is denoted by $[Q]_{i,j}$.  
The {\em Kronecker product} is denoted by $\otimes$.
Given matrices $M_i, i\in [1,p]$, the  column stacking and the row juxtaposition of these matrices are denoted by ${\rm col}\{M_i\}_{i\in[1,p]}$,  and ${\rm row}\{M_i\}_{i\in[1,p]}$, respectively, while 
the block diagonal matrix whose $i$th diagonal block is the matrix $M_i$ is denoted by ${\rm diag} \{M_i\}_{i\in[1,p]}$. Given two sets ${\mathcal S}_1$ and ${\mathcal S}_2$, the {\em difference set} ${\mathcal S}_1 \setminus  {\mathcal S}_2$ is the set of elements of ${\mathcal S}_1$ that are not in ${\mathcal S}_2$. Given a set ${\mathcal S}$, the symbol ${\rm card}({\mathcal S})$ denotes its {\em cardinality}.

A {\em weighted, directed graph} (digraph) is   a triple ${\mathcal G} = ({\mathcal V}, {\mathcal E}, {\mathcal A})$, where ${\mathcal V} = \{1, \dots, N\}=[1,N]$ is the set of nodes, ${\mathcal E}\subset {\mathcal V}\times {\mathcal V}$  is the set of edges, and ${\mathcal A}\in \mathbb{R}^{N\times N}$ is the nonnegative, weighted {\em adjacency matrix} which satisfies $[{\mathcal A}]_{i,j}>  0$ if and only if $(j,i)\in {\mathcal E}$. We assume that $[{{\mathcal A}}]_{i,i} =0$ for every $i\in {\mathcal V}$. For every $i\in [1,N],$  the set of {\em neighbors of node $i$} is  ${\mathcal N}_i \doteq \{j\in [1,N]: (j,i) \in {\mathcal E}\}$. Note that we assume that $i\not\in {\mathcal N}_i$.
The  {\em in-degree} of  node $i$ is $d_i \doteq \sum_{j=1}^N[{{\mathcal A}}]_{i,j} = \sum_{j\in {\mathcal N}_i}[{{\mathcal A}}]_{i,j}$, while the {\em in-degree matrix} $\mathcal{D}$ is the diagonal matrix defined as $\mathcal{D}\doteq {\rm diag}\{d_i\}_{i\in [1,N]}$. The {\em Laplacian}  associated with ${\mathcal G}$
is defined as ${\mathcal{L}}\doteq \mathcal{D}-\mathcal{A}$. 
A {\em weighted undirected graph} is a weighted graph ${\mathcal G} = ({\mathcal V}, {\mathcal E}, {\mathcal A})$ whose adjacency matrix ${\mathcal A}$ is symmetric. When so, also the Laplacian is symmetric and is a positive semi-definite matrix.
The {\em spectrum} of a square matrix $A$, denoted by $\sigma(A)$, is the set of all its eigenvalues.
We let $r$ be the number of distinct eigenvalues of $A$, so that $\sigma(A) = \{\lambda_1, \dots, \lambda_r\}$.  The {\em algebraic multiplicity} and the {\em geometric multiplicity} of the eigenvalue $\lambda_\ell\in {\mathbb C}$  are denoted by $a_\ell$ and $g_\ell$, respectively. 
A matrix $A\in  {\mathbb C}^{n \times n}$ is in {\em Jordan form} \cite{HornJohnson} if it takes the form
\begin{equation}\label{Jform}
    A = {\rm diag} \{A^\ell\}_{\ell\in [1,r]},\end{equation}
where $A^\ell$ is the {\em Jordan block} of size $a_\ell \times a_\ell$ associated  with the eigenvalue $\lambda_\ell$, by this meaning that
\begin{equation}\label{Jblock}
A^\ell = {\rm diag} \{A^{\ell,h_\ell}\}_{h_\ell\in [1,g_\ell]}\in {\mathbb C}^{a_\ell \times a_\ell},\end{equation}
and $A^{\ell, h_\ell}$ is the $h_\ell$th {\em Jordan miniblock} associated with $\lambda_\ell$ and hence taking the form
\begin{equation}\label{Jminiblock}
A^{\ell, h_\ell} = \begin{bmatrix} \lambda_\ell & 1 & 0 &\dots & 0\cr   
0 & \lambda_\ell & 1 & \dots &0\cr
\vdots &\vdots &\ddots &\ddots& \vdots \cr
0 & 0 &0 & \ddots & 1\cr
0 & 0 &0 & \dots & \lambda_\ell\end{bmatrix}\in{\mathbb C
}^{d^{\ell, h_\ell} \times d^{\ell, h_\ell}}.\end{equation}
 %\smallskip
 
\section{Problem set-up}\label{sec.pb.setup}
Consider a discrete-time linear time-invariant state-space model
\begin{align}\label{eq.sys}
    x(t+1)=Ax(t)+Bu(t),
    \qquad t\in \mathbb{Z_+},
\end{align}
where $x(t)\in\mathbb{R}^n$ is the state and $u(t)\in\mathbb{R}^m$ the control input. For simplicity, we assume that   $A$ is an $n\times n$ real matrix   {\em  in Jordan form} \eqref{Jform}-\eqref{Jblock}-\eqref{Jminiblock}, which implies, in particular, that all its $r$ distinct eigenvalues, $\lambda_1, \lambda_2, \dots, \lambda_r$,  are real.
Without loss of generality  we assume that the eigenvalues have been sorted in such a way that $|\lambda_\ell|\ge1$  for $\ell\in [1,r_u]$, while $|\lambda_\ell|<1$  for $\ell\in [r_u+1,r]$. 
\smallskip

\begin{remark}The assumption that 
$A$ is in Jordan form is not restrictive, since it can always be achieved through a change of basis. The assumption that all eigenvalues of 
$A$ are real is restrictive, but the extension to complex eigenvalues follows easily using the {\em real Jordan form}, as in \cite{MitraSundaram2019} and \cite{GaoYang}. Both assumptions are made only to simplify the notation.
\end{remark}

We assume that there is a network of $N$ sensors, connected through a directed graph $\mathcal{G}=({\mathcal V}, \mathcal{E},\mathcal{A})$, each of them taking an indirect measure $y_i(t)\in\mathbb{R}^{p_i}$ of the state:
\begin{align}\label{eq.out}
    y_i(t)=C_ix(t),\quad i\in[1,N].
\end{align}
The objective is to design, for every node $i \in [1,N]$, a local observer that produces an estimate $\hat{x}_i(t)$ of the true state $x(t)$, such that the $i$th {\em estimation error} 
    $e_i(t)\doteq x(t)- \hat{x}_i(t)$
 asymptotically goes to zero, that is,
\begin{align*}
    \lim_{t \to +\infty} e_i(t) = \mathbb{0}, \qquad \forall i \in [1,N].
\end{align*}
Of course, a necessary condition for the previous problem to be solvable is that by using (the input $u$ and) the output measurements of all the agents, it is possible to estimate the state of the system. Therefore, we introduce the following assumption.
\smallskip

\begin{assumption}\label{ass.jointdet}
\cite{GaoYang}
The $N$ agents are {\em jointly detectable}, i.e.,
$$\left(A, \begin{bmatrix} C_1 \cr \vdots \cr C_N \end{bmatrix}\right)
\ \mbox{is a detectable pair.}$$
\end{assumption}
%\smallskip
%%%%%%%%%%%%%%%%%\noindent 
According to the  block-partition of $A$,   the state splits as 
\begin{align*}
    x(t)={\rm col}\{x^\ell(t)\}_{\ell\in[1,r]},\quad x^\ell(t)={\rm col}\{x^{\ell,h_\ell}(t)\}_{h_\ell\in[1,g_\ell]},
\end{align*}
and the output matrices $C_i$, $i\in[1,N],$ as
$$C_i = \begin{bmatrix} C_i^1 & \dots &C_i^r\end{bmatrix}\in {\mathbb R}^{p_i \times n},$$
where $C_i^\ell$, $\ell\in[1,r],$ are in turn partitioned as
$$C_i^\ell = \begin{bmatrix} C_i^{\ell,1} & \dots& C_i^{\ell,g_\ell}\end{bmatrix}\in {\mathbb R}^{p_i \times a_\ell},$$
and each   block $C_i^{\ell, h_\ell}$ has size $p_i\times d^{\ell, h_\ell}$.

We now want to propose two strategies to achieve distributed state estimation, both relying on the Jordan structure of the system matrix $A$. 
In both cases, we will resort to a suitable  permutation of the matrix $A$, and hence of the corresponding entries of the state vector, that will lead to identify, for each agent $i$, a detectable and an undetectable part of the state vector. 
In the first strategy, the sum of the sizes of the detectable and the undetectable parts will coincide with the dimension $n$ of the state vector, while the second strategy will resort to an augmented state observer since the detectable and the undetectable parts will share some common entries. As we will see, the first solution is more efficient, but it works under more restrictive conditions than the other.
\\
To help the reader in keeping track of the symbols used in the paper, 
we have introduced Table 1, below.

\begin{table}[htbp]
\centering
\begin{tabular}{|c|p{6.6cm}|}
\hline
\textbf{Symbol} & \textbf{Meaning} \\
\hline
$\lambda_\ell$ & Generic eigenvalue of $A$ \\
$\alpha_\ell$ & Algebraic multiplicity of $\lambda_\ell$ \\
$g_\ell$ & Geometric multiplicity of $\lambda_\ell$ \\
$A^\ell$ & Jordan block associated with $\lambda_\ell$ \\
$A^{\ell,h_\ell}$ & $h_\ell$th Jordan miniblock associated with $\lambda_\ell$ \\
$h_\ell$ & Index of a generic Jordan miniblock associated with $\lambda_\ell$ \\
$d^{\ell,h_\ell}$ & Dimension of the $h_\ell$th Jordan miniblock associated with $\lambda_\ell$ \\
$t_i^{\ell,h_\ell}$ & Index of the first nonzero column of the block $C_i^{\ell,h_\ell}$,  when $C_i^{\ell,h_\ell}\ne {\mathbb 0}$, $i\in [1, N]$ and $
\ell\in [1,r_u]$
\\
\hline
\end{tabular}
 \vspace{0.5em}
 
    {\small Table 1. Notation and meaning of the main symbols}
\end{table}

\section{Preliminary reordering of the Jordan mini\-blocks} \label{Sec.Reordering}

As a first step, 
we act on the matrix pairs $(A,C_i), i\in [1,N],$ in order to sort, for every $i$, the blocks of $(A, C_i)$ and the entries of the vector $x(t)$ according to the detectability properties of the pair $(A,C_i)$. The procedure is close to the one proposed in \cite{GaoYang}, but introduces some  changes that lead to a simpler notation and pave the way to simpler solvability conditions. 
\medskip

{\bf Step 1: Reordering within $A^\ell$ of the Jordan miniblocks $A^{\ell, h_\ell}$ for agent $i$ based on their observability properties.}\\ For every $i\in [1,N]$ and $\ell\in [1,r_u]$,  we partition the set of indices $h_\ell \in [1, g_\ell]$ into three disjoint sets, based on the properties of the block $C_i^{\ell,h_\ell}$ in $C_i$ (corresponding to the Jordan miniblock $A^{\ell,h_\ell}$):
\begin{subequations}
\label{eq:defGil}
    \begin{align}
    \mathbb{G}_{i,1}^{\ell}&\doteq\{h_\ell\in[1,g_\ell]:C_i^{\ell,h_\ell}=\mathbb{0}\},\\
    \mathbb{G}_{i,2}^{\ell}&\doteq\{h_\ell\in[1,g_\ell]:
      (C_i^{\ell,h_\ell}\ne\mathbb{0}) \wedge\ (t_i^{\ell, h_\ell} >1) \},\\
    \mathbb{G}_{i,3}^{\ell}&\doteq\{h_\ell\in[1,g_\ell]:
    (C_i^{\ell,h_\ell}\ne\mathbb{0}) \wedge\ (t_i^{\ell, h_\ell} = 1) \}.
\end{align}
\end{subequations}
{\color{black} where $t_i^{\ell,h_\ell}$ is the index of the first nonzero column of $C_i^{\ell,h_\ell}$.} 
Clearly, $\mathbb{G}_{i,1}^{\ell} \cup \mathbb{G}_{i,2}^{\ell} \cup \mathbb{G}_{i,3}^{\ell} = [1, g_\ell].$ On the other hand, for every pair $(\ell, h_\ell) \in [1,r_u] \times [1, g_\ell]$, 
we can  introduce the index sets
\begin{equation}\label{eq:defVlhl}
\mathcal{V}_{k}^{\ell,h_\ell}\doteq\{i \in [1,N]: h_\ell\in\mathbb{G}_{i,1}^{\ell}\}, \quad k\in [1,3]. 
\end{equation}
Note that the three sets are disjoint, and
$\mathcal{V}_{1}^{\ell,h_\ell} \cup \mathcal{V}_{2}^{\ell,h_\ell} \cup \mathcal{V}_{3}^{\ell,h_\ell} =[1,N].$ We want to comment on the meaning of the previous sets. It is well-known (and an immediate consequence of the PBH observability test \cite{Kailath}) that a pair $(A^{\ell,h_\ell},C_i^{\ell, h_\ell})$, where $A^{\ell,h_\ell}$ is a Jordan miniblock associated with the eigenvalue $\lambda_\ell$, is observable if and only if the first column of $C_i^{\ell, h_\ell}$ is nonzero (i.e., $t_i^{\ell, h_\ell} =1$). Under this perspective, the sets $\mathbb{G}_{i,k}^{\ell}, k\in[1,3],$ defined in \eqref{eq:defGil} identify, for a given agent $i$ and a certain eigenvalue $\lambda_\ell, \ell\in [1,r_u]$, which Jordan miniblocks $A^{\ell,h_\ell}$ are completely unobservable, partly observable and completely observable. The second family of sets, $\mathcal{V}_{k}^{\ell,h_\ell}, k\in[1,3],$ defined in \eqref{eq:defVlhl}, on the other hand, identifies for each Jordan miniblock $A^{\ell,h_\ell}$ the agents for which the block is completely unobservable, partly observable and completely observable.

The following proposition shows that, under Assumption \ref{ass.jointdet}, for each Jordan miniblock  $A^{\ell,h_\ell}$ there exists at least one agent $i\in [1,N]$ such that the pair $(A^{\ell,h_\ell}, C_i^{\ell,h_\ell})$ is completely observable.
\smallskip

\begin{proposition}
Under Assumption \ref{ass.jointdet}, for every pair $(\ell, h_\ell)\in [1,r_u] \times [1, g_\ell]$, the set ${\mathcal V}_3^{\ell,h_\ell}$ is not empty, or, equivalently,
there exists $i\in [1,N]$ such that $h_\ell\in {\mathbb G}_{i,3}^\ell.$
\end{proposition}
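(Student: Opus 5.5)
The proof is a contrapositive argument based on the Popov--Belevitch--Hautus (PBH) detectability test applied to the stacked pair in Assumption \ref{ass.jointdet}. The idea is to exhibit an explicit eigenvector of $A$ supported on the first entry of the miniblock $A^{\ell,h_\ell}$. If no agent sees that entry, this eigenvector lies in the kernel of every $C_i$, and joint detectability fails.

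First, I fix $(\ell,h_\ell)\in[1,r_u]\times[1,g_\ell]$. Let $v\in\mathbb{R}^n$ be the canonical vector $\mathbb{e}_k$, where $k$ is the position in $x$ of the first entry of the subvector $x^{\ell,h_\ell}$. Since $A$ is block diagonal in Jordan form \eqref{Jform}-\eqref{Jminiblock}, $Av$ equals the corresponding column of $A$. That column is zero outside the block $A^{\ell,h_\ell}$, and inside it equals the first column of $A^{\ell,h_\ell}$, namely $\lambda_\ell\,\mathbb{e}_1$. Hence $Av=\lambda_\ell v$, so $v\neq\mathbb{0}$ is an eigenvector of $A$ for $\lambda_\ell$.

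Second, for every agent $i$ I compute $C_i v$. It is the $k$th column of $C_i$, which is exactly the first column of the block $C_i^{\ell,h_\ell}$. I then argue by contradiction and suppose ${\mathcal V}_3^{\ell,h_\ell}=\emptyset$. This means $h_\ell\notin\mathbb{G}^\ell_{i,3}$ for every $i\in[1,N]$. By the definitions \eqref{eq:defGil}, for each $i$ either $C_i^{\ell,h_\ell}=\mathbb{0}$ or $t_i^{\ell,h_\ell}>1$. In both cases the first column of $C_i^{\ell,h_\ell}$ is zero, so $C_i v=\mathbb{0}$ for all $i$. Therefore
$$\begin{bmatrix} A-\lambda_\ell I\cr C_1\cr\vdots\cr C_N\end{bmatrix} v=\mathbb{0},\qquad v\ne\mathbb{0}.$$
Because $\ell\le r_u$, we have $|\lambda_\ell|\ge1$. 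The PBH test then says that $\left(A,{\rm col}\{C_i\}_{i\in[1,N]}\right)$ is not detectable, which contradicts Assumption \ref{ass.jointdet}.

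There is no real obstacle here. The only point needing care is the index bookkeeping: the $k$th column of $A$ restricted to the block is the first column of $A^{\ell,h_\ell}$, which equals $\lambda_\ell\mathbb{e}_1$ because the superdiagonal ones enter only from the second column onward. Likewise, the $k$th column of $C_i$ is the first column of $C_i^{\ell,h_\ell}$. The same computation gives the stated equivalence: $h_\ell\in\mathbb{G}^\ell_{i,3}$ for some $i$ is, by \eqref{eq:defVlhl}, exactly the condition $i\in{\mathcal V}^{\ell,h_\ell}_3$, with the index $3$ intended in the definition of ${\mathcal V}_k^{\ell,h_\ell}$.
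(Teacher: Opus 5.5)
Your proof is correct and follows the same route as the paper, which applies the PBH test to the stacked pair $(A,{\rm col}\{C_i\}_{i\in[1,N]})$ in a single line; you make explicit the eigenvector that the paper leaves implicit, namely the canonical vector at the head of the miniblock $A^{\ell,h_\ell}$. You are also right that the definition of ${\mathcal V}_k^{\ell,h_\ell}$ should read $h_\ell\in{\mathbb G}_{i,k}^{\ell}$ rather than $h_\ell\in{\mathbb G}_{i,1}^{\ell}$.
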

\begin{proof}
    By Assumption \ref{ass.jointdet} and the hypothesis that $A$ is  Jordan form, it follows from the PBH observability test that for every $(\ell, h_\ell)\in [1,r_u] \times [1, g_\ell]$ there exists $i\in [1,N]$ such that $(C_i^{\ell,h_\ell}\ne\mathbb{0}) \wedge\ (t_i^{\ell, h_\ell} = 1).$
    This implies that $h_\ell \in {\mathbb G}_{i,3}^\ell$.
\end{proof}

Based on the definitions \eqref{eq:defGil} and the fact that $\mathbb{G}_{i,1}^{\ell} \cup \mathbb{G}_{i,2}^{\ell} \cup \mathbb{G}_{i,3}^{\ell} = [1, g_\ell]$, for every $i\in[1,N]$ and $\ell\in [1,r_u]$
there exists a permutation matrix $P^{\ell}_i \in {\mathbb R}^{a_\ell \times a_\ell}$  such that 
\begin{align*}
 ({P^{\ell}_i})^\top A^{\ell}P^{\ell}_i&=\begin{bmatrix}
    A_{i,1}^{\ell} &\vline&\mathbb{0}&\mathbb{0}\\
       \hline
    \mathbb{0}&\vline&A_{i,2}^{\ell} &\mathbb{0}\\
    \mathbb{0}&\vline&\mathbb{0}&A_{i,3}^{\ell}
\end{bmatrix},
  ({P^{\ell}_i})^\top B^\ell=\begin{bmatrix}
    B_{i,1}^{\ell}\\
    \hline
    B_{i,2}^{\ell}\\
    B_{i,3}^{\ell}
\end{bmatrix}\\
C_iP^{\ell}_i&=\begin{bmatrix}
    \mathbb{0}&\vline&C_{i,2}^{\ell}&C_{i,3}^{\ell}
\end{bmatrix},
 ({P^{\ell}_i})^\top x^\ell(t)=\begin{bmatrix}
    x_{i,1}^{\ell}(t)\\
    \hline
    x_{i,2}^{\ell}(t)\\
    x_{i,3}^{\ell}(t)
\end{bmatrix}, 
\end{align*}
where the block diagonal matrices $A_{i,1}^\ell, A_{i,2}^\ell$ and $A_{i,3}^\ell$ group all Jordan miniblocks corresponding to indices $h_\ell$ belonging to ${\mathbb G}_{i,1}^{\ell}, {\mathbb G}_{i,2}^{\ell}$ and ${\mathbb G}_{i,3}^{\ell}$, respectively, i.e.,
\begin{equation*}
     A_{i,k}^\ell \doteq{\rm diag}\{A^{\ell,h_\ell}\}_{h_\ell\in\mathbb{G}_{i,k}^{\ell}},\quad
    k\in [1,3].
\end{equation*}
The blocks of $(P_i^\ell)^\top B^{\ell}$ and $C_iP_i^\ell$ are defined accordingly.
In particular,  the matrix $C_{i,2}^\ell$ (respectively, $C_{i,3}^\ell$)  consists of the blocks $C_i^{\ell,h_\ell}$ of the matrix $C_i$ corresponding to the Jordan miniblocks $A^{\ell,h_\ell}$ with $h_\ell\in \mathbb{G}_{i,2}^{\ell}$ (respectively, $h_\ell\in \mathbb{G}_{i,3}^{\ell}$).
\smallskip

Note that if $h_\ell \in \mathbb{G}_{i,2}^{\ell}$,
then $C_i^{\ell,h_\ell}$ takes the form
$$C_i^{\ell,h_\ell} = \begin{bmatrix} {\mathbb 0}_{p_i\times (t_i^{\ell,h_\ell}-1)} & C_{i,o}^{\ell,h_\ell}\end{bmatrix},$$
and 
the first column of $C_{i,o}^{\ell,h_\ell}$ is nonzero.  Moreover, the corresponding  
blocks $A^{\ell, h_\ell}$  and  {\color{black}$x^{\ell,h_\ell}(t)$} can be  
partitioned as follows  
\begin{equation}A^{\ell,h_\ell} = \begin{bmatrix} A^{\ell,h_\ell}_{i,u} &A^{\ell,h_\ell}_{i,*}\cr\mathbb{0} & A^{\ell,h_\ell}_{i,o}\end{bmatrix}, \quad
{\color{black}x^{\ell,h_\ell}(t) = \begin{bmatrix} x_{i,u}^{\ell, h_\ell}(t) \cr x_{i,o}^{\ell, h_\ell}(t)
\end{bmatrix}},
\label{eq.fatt.utile}
\end{equation}
where $A^{\ell,h_\ell}_{i,u}$ and $A^{\ell,h_\ell}_{i,o}$ 
are described as in \eqref{Jminiblock} (and hence 
have, in turn, the structure of two Jordan miniblocks) 
and have  sizes $t_i^{\ell,h_\ell}-1$ and $d^{\ell, h_\ell} - (t_i^{\ell,h_\ell}-1)$, respectively. 
It is easy to see that the pair $(A^{\ell,h_\ell}_{i,o},C_{i,o}^{\ell,h_\ell})$ is observable.  
\medskip

{\bf Step 2: Grouping within $A_{i,2}^\ell$ of the observable and unobservable parts of all miniblocks $A^{\ell, h_\ell}$, $h_\ell\in {\mathbb G}_{i,2}^{\ell, h_\ell}$}. \\  We now focus on the pair $(A_{i,2}^\ell,C_{i,2}^\ell)$ and permute the entries in such a way that  the entries corresponding to all the blocks $A_{i,u}^{\ell, h_\ell}$  come first, while the entries corresponding to all the blocks $A_{i,o}^{\ell, h_\ell}$ appear at the end. This amounts to considering the permuted matrices
\begin{align*}
{(P_{i,2}^\ell)}^\top A_{i,2}^\ell{P_{i,2}^\ell}&=\begin{bmatrix}
A_{i,2u}^{\ell}&\vline&A_{i,2*}^{\ell}\\
    \hline
 \mathbb{0} &\vline& A_{i,2o}^{\ell}
\end{bmatrix}, \ {(P_{i,2}^\ell)}^\top  B^\ell=\begin{bmatrix}
    B_{i,2u}^{\ell}\\
    \hline
    B_{i,2o}^{\ell}
\end{bmatrix},\\
C_{i,2}^\ell{P_{i,2}^\ell}&=\begin{bmatrix}
    \mathbb{0}&\vline&C_{i,2o}^{\ell}
\end{bmatrix}, \quad {(P_{i,2}^\ell)}^\top x_{i,2}^\ell=\begin{bmatrix}
    x_{i,2u}^{\ell}\\
    \hline
    x_{i,2o}^{\ell}
\end{bmatrix},
\end{align*}
where $A_{i,2u}^\ell, A_{i,2o}^\ell$ and $C_{i,2o}^{\ell}$ consist of all blocks $A_{i,u}^{\ell,h_\ell}, A_{i,o}^{\ell,h_\ell}$ and
$C_{i,o}^{\ell,h_\ell}$, respectively,  corresponding to the indices $h_\ell\in \mathbb{G}_{i,2}^{\ell}$, namely
\begin{align*}
A_{i,2u}^\ell &\doteq {\rm diag} \{A^{\ell,h_\ell}_{i,u} \}_{h_\ell \in {\mathbb G}_{i,2}^\ell},\
A_{i,2o}^\ell \doteq {\rm diag} \{A^{\ell,h_\ell}_{i,o} \}_{h_\ell \in {\mathbb G}_{i,2}^\ell}, \\
C_{i,2o}^{\ell}&\doteq {\rm row} \{C^{\ell,h_\ell}_{i,o} \}_{h_\ell \in {\mathbb G}_{i,2}^\ell}.
\end{align*}
The other blocks are defined accordingly.
\smallskip

{\bf Step 3: Grouping within $A$ of blocks with the same observability properties for agent $i$ but corresponding to distinct eigenvalues.}\\ We now combine the outcome of the two previous steps and introduce a final permutation that groups  blocks with the same observability properties (for agent $i$) corresponding to distinct (unstable) eigenvalues. We define  
\begin{align*}
 A_{i,k}&\doteq{\rm diag}\{A_{i,k}^{\ell}\}_{\ell\in [1,r_u]}, \quad B_{i,k}\doteq{\rm col}\{{B_{i,k}^{\ell}}\}_{\ell\in [1,r_u]} \\
 x_{i,k}(t)&\doteq{\rm col}\{{x_{i,k}^{\ell}(t)}\}_{\ell\in [1,r_u]}, \quad \quad k=1,3,
  \\
A_{i,2u}&\doteq{\rm diag}\{A_{i,2u}^{\ell}\}_{\ell\in [1,r_u]},\quad  
A_{i,2*}\doteq{\rm diag}\{A_{i,2*}^{\ell}\}_{\ell\in [1,r_u]},\\ 
A_{i,2o}&\doteq{\rm diag}\{A_{i,2o}^{\ell}\}_{\ell\in [1,r_u]},
\\
 B_{i,2u}&\doteq{\rm col}\{{B_{i,2u}^{\ell}}\}_{\ell\in [1,r_u]},\quad 
B_{i,2o}\doteq{\rm col}\{{B_{i,2o}^{\ell}}\}_{\ell\in [1,r_u]},\\
C_{i,2o}&\doteq{\rm row}\{{C_{i,2o}^{\ell}}\}_{\ell\in [1,r_u]},\quad
C_{i,3}\doteq{\rm row}\{{C_{i,3}^{\ell}}\}_{\ell\in [1,r_u]}, \\
x_{i,2u}(t)&\doteq{\rm col}\{{x_{i,2u}^{\ell}(t)}\}_{\ell\in [1,r_u]},
x_{i,2o}(t)\doteq{\rm col}\{{x_{i,2o}^{\ell}(t)}\}_{\ell\in [1,r_u]}
\end{align*}
We also group together Jordan blocks corresponding to stable eigenvalues:
\begin{align*}
A_{s}&\doteq{\rm diag}\{A^{\ell}\}_{\ell\in[r_u+1,r]},\quad B_{s}\doteq{\rm diag}\{B^{\ell}\}_{\ell\in[r_u+1,r]},\\
 C_{s}& \doteq{\rm row}\{C^{\ell}\}_{\ell\in[r_u+1,r]}, \quad x_{s}(t)\doteq{\rm col}\{{x^{\ell}(t)}\}_{\ell\in [r_u+1,r]}.
\end{align*}
As a result of all previous re-orderings and definitions,  we can claim that for every $i\in [1,N]$
there exists a permutation matrix $Q_i\in {\mathbb R}^{n \times n}$ such that 
\begin{subequations}\label{eq.ste3.perm}
\begin{align}
{Q_i}^\top AQ_i&=\begin{bmatrix}
    F_{i,u}&\vline& F_{i,*}\\
    \hline
    \mathbb{0}&\vline& F_{i,d}
\end{bmatrix}\\
&\doteq %\scriptsize
 \begin{bmatrix}
    A_{i,1}&\mathbb{0} &\vline&\mathbb{0}&\mathbb{0}&\mathbb{0}\\
    \mathbb{0}&A_{i,2u}&\vline&A_{i,2*} &\mathbb{0}&\mathbb{0}\\
    \hline
    \mathbb{0} &\mathbb{0}&\vline&A_{i,2o}&\mathbb{0}&\mathbb{0} \\
    \mathbb{0}&  \mathbb{0}&\vline&\mathbb{0}&A_{i,3}&\mathbb{0}\\
        \mathbb{0}&  \mathbb{0}&\vline&\mathbb{0}&\mathbb{0}&A_s
\end{bmatrix} \nonumber\\
C_iQ_i&=\begin{bmatrix}
    \mathbb{0}&\vline&H_{i,d}
\end{bmatrix} 
\doteq\begin{bmatrix}
    \mathbb{0}&\mathbb{0}&\vline&C_{i,2o}&C_{i,3}&C_{s}
\end{bmatrix}, \\
{Q_i}^\top x(t)&=\begin{bmatrix}
    z_{i,u}(t)\\
    \hline
    z_{i,d}(t)
\end{bmatrix}
\doteq %{\color{black}\scriptsize
\begin{bmatrix}
    x_{i,1}(t)\\
    x_{i,2u}(t)\\
    \hline
    x_{i,2o}(t)\\
    x_{i,3}(t)\\
    x_{s}(t)
\end{bmatrix}, \label{eq.zdzu}\\ {Q_i}^\top B&=\begin{bmatrix}
    G_{i,u}\\
    \hline
    G_{i,d}
\end{bmatrix} 
\doteq %{\color{black}\scriptsize
\begin{bmatrix}
    B_{i,1}\\
    B_{i,2u}\\
    \hline
    B_{i,2o}\\
    B_{i,3}\\
    B_{s}
\end{bmatrix}.  
\end{align}
\end{subequations}
From now on, we will rely on the following assumption\footnote{Note that this assumption is also used in \cite{GaoYang}, but it remains hidden. Without this assumption, the matrix governing the estimation error dynamics of the Luenberger observer proposed in (21) of  \cite{GaoYang} would not be Schur stable.}.
\smallskip

\begin{assumption}\label{ass.obs}
    For every $i\in[1,N]$ and $\ell\in [1,r_u]$, the   vectors
    $
    \{C_i^{\ell,h_\ell}\mathbb{e}_{t_i^{\ell,h_\ell}}
    \}_{h_\ell\in\mathbb{G}_{i,2}^{\ell}\cup  \mathbb{G}_{i,3}^{\ell}} = \{C_{i,o}^{\ell,h_\ell}\mathbb{e}_{1}
    \}_{h_\ell\in\mathbb{G}_{i,2}^{\ell}} \cup \{C_i^{\ell,h_\ell}\mathbb{e}_{1}\}_{h_\ell\in\ \mathbb{G}_{i,3}^{\ell}}
     $
    are linearly independent.  
    \end{assumption}

We remark that if Assumption \ref{ass.obs} holds, then it follows from the observability of the pairs 
$(A^{\ell,h_\ell},C_i^{\ell,h_\ell})$  for all $h_\ell\in\mathbb{G}_{i,3}^{\ell}$, and of the pairs 
$(A_{i,o}^{\ell,h_\ell},C_{i,o}^{\ell,h_\ell})$ for all $h_\ell\in \mathbb{G}_{i,2}^{\ell}$, 
 that
for every $i\in[1,N]$ and $\ell\in [1,r_u]$, the pair 
$$\left(
    \begin{bmatrix}
        A_{i,2o}^{\ell}&\mathbb{0}\\
        \mathbb{0}&A_{i,3}^{\ell}
    \end{bmatrix},\begin{bmatrix}
    C_{i,2o}^{\ell}& C_{i,3}^{\ell}
\end{bmatrix}\right)$$ 
    is observable. Therefore for every $i\in [1,N]$ the pair $(F_{i,d}, H_{i,d})$ (that includes also the stable eigenvalues of $A$, namely the block $A_s$, and the corresponding blocks in $C_i$) is detectable.
    \\
    The preceding discussion clarifies the meaning of Assumption \ref{ass.obs}. It requires that, for each pair $(A, C_i), i\in [1,N],$ there exists a permutation matrix 
$Q_i$  such that the transformed pair is in Kalman detectability form.

\section{First strategy}\label{sec.strategy1}

We now exploit the fact that  each pair $(A,C_i)$ has been reduced to Kalman detectability form
to provide a distributed estimation scheme in which   each agent $i\in [1,N]$ separately estimates the portions of the  state that are detectable and   undetectable for it, namely $z_{i,d}(t)$ and $z_{i,u}(t)$, respectively.

{\bf Estimate of $z_{i,d}(t)$:}\ By relying on the fact that for each $i\in [1,N]$ the pair 
$(F_{i,d}, H_{i,d})$ is detectable, we can
 design a Luenberger observer that allows agent $i$ to estimate the part of the vector $x(t)$ that is detectable for it, namely $z_{i,d}(t)$ (see \eqref{eq.zdzu}), as follows:
\begin{align}\label{eq.luenberger}
\hat z_{i,d}(t+1)=&F_{i,d}\hat z_{i,d}(t)+G_{i,d}u(t)\\
    &+L_{i,d}(y_i(t)-H_{i,d}\hat z_{i,d}(t)),\nonumber
\end{align}
where $L_{id}$ is chosen so that $F_{i,d} - L_{i,d} H_{i,d}$ is Schur stable. 
If we define the estimation error of the part of the state that is detectable for agent $i$ as $\eta_{i,d}(t) \doteq z_{i,d}(t)-\hat z_{i,d}(t)$, then the estimation error dynamics are given by
\begin{align}
     \eta_{i,d}(t+1)=(F_{i,d}-L_{i,d}H_{i,d}) \eta_{i,d}(t),
\end{align} 
and $\eta_{i,d}(t)$ asymptotically goes to zero, for every 
$\eta_{i,d}(0)$. 
\smallskip

{\bf Estimate of $z_{i,u}(t)$:}\ In order to estimate the remaining part of the state, denoted by $z_{i,u}(t)$, 
agent $i$ has to rely on the information it can acquire from its neighbors, since the block of  $C_i Q_i$ corresponding to this portion of the state vector is the zero matrix. Specifically, 
 the estimate  $\hat z_{i,u}(t)$ is  obtained using the estimates   of the state entries appearing in $z_{i,u}(t)$ provided by each agent $j
 \in {\mathcal N}_i$. 
 {\color{black} Such estimates are obtained} in this way: first,  (by reversing \eqref{eq.zdzu}) we {\color{black} express  the estimate $\hat x_j(t)$ of the entire state $x(t)$ provided by agent $j$ 
  as}
\begin{align*}
\hat x_j(t) =  {Q_j} 
\begin{bmatrix}
    \hat z_{j,u}(t)\\
    \hline
    \hat z_{j,d}(t)
\end{bmatrix}.
\end{align*}
Secondly, the estimate of $z_{i,u}(t)$ provided by  agent $j\in {\mathcal N}_i$ can be obtained as
$\begin{bmatrix}
        I&\mathbb{0}
    \end{bmatrix}Q_i^\top \hat x_j(t),$
    where the size of the identity matrix coincides with the {\color{black} size} of $z_{i,u}(t)$.
    \\
In detail, we implement the following observer for $z_{i,u}(t)$:
\begin{align}\label{eq.estimate.undetectble.reduced}
    \hat z_{i,u}(t+1)=&F_{i,u}\hat z_{i,u}(t)+F_{i,*}\hat z_{i,d}(t)+G_{i,u} u(t)\\
                    &+K_iF_{i,u}\sum_{j\in\mathcal{N}_i}[\mathcal{A}]_{i,j}\begin{bmatrix}
        I&\mathbb{0}
    \end{bmatrix}Q_i^\top(\hat x_j(t)-\hat x_i(t))\nonumber
\end{align}
where 
the matrix ${\mathcal A}$ is the adjacency matrix of the communication graph ${\mathcal G}$, and $K_i$ is a block-diagonal matrix of the following form:
 \begin{align*}
  K_i \doteq \begin{bmatrix}{\rm diag} \left\{ K_{i,1}^\ell\right\}_{\ell\in [1,r_u]} & \mathbb{0} \cr \mathbb{0}&{\rm diag} \left\{K_{i,2}^\ell\right\}_{\ell\in [1,r_u]  }\end{bmatrix},
  \end{align*}
where
\begin{align}
K_{i,1}^\ell \doteq&\ {\rm diag} \left\{ k^{\ell,h_\ell} I_{d^{\ell, h_\ell}}\right\}_{ h_\ell \in {\mathbb G}_{i,1}^\ell},
\label{Kappa1}\\
K_{i,2}^\ell \doteq&\ {\rm diag} \left\{ k^{\ell,h_\ell} I_{t_i^{\ell, h_\ell}-1}\right\}_{ h_\ell \in {\mathbb G}_{i,2}^\ell}, \label{Kappa2}
\end{align}
and $k^{\ell, h_\ell}$ is a real design parameter.
 It remains to determine under what conditions  for each pair $(\ell, h_\ell)$ such that   ${\mathcal V}_3^{\ell, h_\ell} \subsetneq [1,N]$ (i.e., the block  $x^{\ell,h_\ell}(t)$ is not completely observable for all agents),  there exists\footnote{ \label{footn:numerok}
    The number of  gains $k^{\ell,h_\ell}$ to be designed  coincides with the cardinality of the set 
    ${\mathcal U}_\ell \doteq \{ (\ell, h_\ell): (\ell \in [1,r_u]) \wedge ({\mathcal V}_3^{\ell, h_\ell} \ne [1,N])\} \subseteq [1,r_u]\times [1,g_\ell],$
 which  is upper bounded by 
    $\sum_{\ell=1}^{r_u}g_\ell$, the number of miniblocks of $A$ associated to unstable eigenvalues. In the sequel, we will   write $(\ell, h_\ell) \in [1,r_u]\times [1,g_\ell]$ rather than specifying that $(\ell, h_\ell) \in {\mathcal U}_\ell.$} $k^{\ell, h_\ell}\in {\mathbb R}$ such that the estimate $\hat z_{i,u}(t)$ of the undetectable part of the state of each agent $i\in {\mathcal V}_1^{\ell, h_\ell} \cup {\mathcal V}_2^{\ell, h_\ell}$ asymptotically converges to the true value.    To this end, for each such node $i$, we   define  the estimation error of the undetectable part as $\eta_{i,u}(t) \doteq z_{i,u}(t)-\hat z_{i,u}(t)$, whose dynamics are equal to 
\begin{align*}
    \eta_{i,u}(t+1)=&F_{i,u}\eta_{i,u}(t)+F_{i,*} \eta_{i,d}(t)\\
            &-K_iF_{i,u}\sum_{j\in\mathcal{N}_i}[\mathcal{A}]_{i,j}\begin{bmatrix}
        I&\mathbb{0}
    \end{bmatrix}Q_i^\top(\hat x_j(t)-\hat x_i(t))\\
        =&(I - K_i[\mathcal{L}]_{i,i})F_{i,u}\eta_{i,u}(t)+F_{i,*} \eta_{i,d}(t)\\
            &- K_iF_{i,u}\sum_{j\in\mathcal{N}_i}[\mathcal{L}]_{i,j}\begin{bmatrix}
        I&\mathbb{0}
    \end{bmatrix}Q_i^\top Q_j \begin{bmatrix}\eta_{j,u}(t)\cr \eta_{j,d}(t)\end{bmatrix},
\end{align*}
{\color{black}     where we have used the definition of the Laplacian matrix $\mathcal{L}$.}
As a result, the overall dynamics of the observer estimation error can be written as 
\begin{equation}
\label{eq.triang.und}
\begin{bmatrix} \eta_u(t+1)\cr \eta_d(t+1)\end{bmatrix}
= \begin{bmatrix} \Gamma_u & \Gamma_* \cr {\mathbb 0} & \Gamma_d\end{bmatrix}
\begin{bmatrix} \eta_u(t)\cr \eta_d(t)\end{bmatrix}
\end{equation}
where  $\eta_d(t) \doteq {\rm col}\{\eta_{i,d}(t)\}_{i\in [1,N]}$, $\eta_u(t) \doteq {\rm col}\{\eta_{i,u}(t)\}_{i\in [1,N]}$,  $\Gamma_d \doteq {\rm diag} \{ F_{i,d} - L_{i,d}H_{i,d}\}_{i\in [1,N]}$ is Schur, while $\Gamma_u$ and $
\Gamma_*$ are  suitable matrices that can be obtained from the previous equations describing the dynamics of the vectors $\eta_{i,u}(t)$.
While the {\color{black} expression of $\Gamma_*$ is irrelevant for the convergence analysis, that}  of $\Gamma_u$ will be derived later. 
The convergence to zero of $\eta_{d}(t)$   has already been discussed. On the other hand, due to the block triangular structure of the matrix in \eqref{eq.triang.und}, 
condition 
 \begin{align}
     \lim_{t\to+\infty}\eta_{u}(t)=\mathbb{0},
 \end{align}
holds 
 if and only if 
the matrix $\Gamma_u$ is Schur.

To determine the conditions under which we can find  parameters $k^{\ell,h_\ell}$ that guarantee the convergence to zero of the estimation error $\eta_u(t)$, we consider the dynamics of each block that appears in $z_{i,u}(t)$. This will also lead us to determine the expression of $\Gamma_u$.
 For every $i\in {\mathcal V}_1^{\ell, h_\ell}\cup {\mathcal V}_2^{\ell, h_\ell}$, introduce the matrix with $d^{\ell, h_\ell}$ columns
\begin{align*}
S_i^{\ell, h_\ell} \doteq \begin{cases}  I_{d^{\ell,h_\ell}}, & {\rm if} \ i\in {\mathcal V}_1^{\ell, h_\ell};\cr
\begin{bmatrix} I_{t_i^{\ell,h_\ell}-1} & {\mathbb 0}\end{bmatrix},& {\rm if} \ i\in {\mathcal V}_2^{\ell, h_\ell},
\end{cases}
\end{align*}
and
we distinguish two cases: $i\in {\mathcal V}_1^{\ell, h_\ell}$ and $i\in {\mathcal V}_2^{\ell, h_\ell}$.
\\
    $\bullet$   \underline{For $i\in {\mathcal V}_1^{\ell, h_\ell}$}, the estimate of  $x^{\ell, h_\ell}(t)$ updates as
\begin{align}
\label{eq.estimate.undetectble.directed}
        \hat x_{i}^{\ell,h_\ell}&(t+1)=
        A^{\ell,h_\ell}
        \hat x_{i}^{\ell,h_\ell}(t)
+B^{\ell,h_\ell} u(t)\\&+k^{\ell,h_\ell}
    A^{\ell,h_\ell}\sum_{j\in\mathcal{N}_i}[\mathcal{A}]_{i,j}(\hat x_j^{\ell,h_\ell}(t)-\hat x_i^{\ell,h_\ell}(t)).\nonumber
\end{align}
Therefore its estimation error $e_{i}^{\ell,h_\ell}
(t) \doteq {\color{black}x^{\ell, h_\ell}(t) - \hat x_i^{\ell, h_\ell}(t)}$ updates as:
        \begin{align} 
        e_{i}^{\ell,h_\ell}(t+1)=&(1 -  k^{\ell,h_\ell}[\mathcal{L}]_{i,i})
         A^{\ell,h_\ell}e_{i}^{\ell,h_\ell}(t) \nonumber \\
         &- k^{\ell,h_\ell}
        A^{\ell,h_\ell}\sum_{j\in \mathcal{N}_{i}}[\mathcal{L}]_{i,j}e_j^{\ell,h_\ell}(t) \nonumber\\
        =&(1 -  k^{\ell,h_\ell}[\mathcal{L}]_{i,i})
         A^{\ell,h_\ell}e_{i}^{\ell,h_\ell}(t) \label{eilh}\\
         &- k^{\ell,h_\ell}
        A^{\ell,h_\ell}\sum_{j\in  {\mathcal N}_i \cap\mathcal{V}_{1}^{\ell,h_\ell}}[\mathcal{L}]_{i,j}e_j^{\ell,h_\ell}(t) \nonumber
        \\&- k^{\ell,h_\ell}
        A^{\ell,h_\ell}\sum_{j\in {\mathcal N}_i \cap\mathcal{V}_{2}^{\ell,h_\ell}}[\mathcal{L}]_{i,j}e_j^{\ell,h_\ell}(t) \nonumber
        \\
        &- k^{\ell,h_\ell}
        A^{\ell,h_\ell}\sum_{j\in {\mathcal N}_i \cap \mathcal{V}_{3}^{\ell,h_\ell}}[\mathcal{L}]_{i,j}e_j^{\ell,h_\ell}(t). \nonumber
    \end{align}
By using \eqref{eq.fatt.utile}, the previous equation becomes:
\begin{align*} 
       \!\!\!\! e_{i}^{\ell,h_\ell}(t+1)&=(1 -  k^{\ell,h_\ell}[\mathcal{L}]_{i,i})
         A^{\ell,h_\ell}e_{i}^{\ell,h_\ell}(t)\\
         &- k^{\ell,h_\ell}
        A^{\ell,h_\ell}\sum_{j\in {\mathcal N}_i \cap \mathcal{V}_{1}^{\ell,h_\ell}}[\mathcal{L}]_{i,j}e_j^{\ell,h_\ell}(t) \nonumber
        \\&- k^{\ell,h_\ell}\!\!\!\!\!\!
        \sum_{j\in {\mathcal N}_i \cap\mathcal{V}_{2}^{\ell,h_\ell}}\!\!\!\!\!\![\mathcal{L}]_{i,j}\begin{bmatrix} A^{\ell,h_\ell}_{j,u} &A^{\ell,h_\ell}_{j,*}\cr\mathbb{0} & A^{\ell,h_\ell}_{j,o}\end{bmatrix} \!\!\!
        {\color{black}\begin{bmatrix}e_{j,u}^{\ell,h_\ell}(t)\\[0.5ex]e_{j,o}^{\ell,h_\ell}(t)
        \end{bmatrix}} \nonumber
        \\&- k^{\ell,h_\ell}
        A^{\ell,h_\ell}\sum_{j\in {\mathcal N}_i \cap\mathcal{V}_{3}^{\ell,h_\ell}}[\mathcal{L}]_{i,j}e_j^{\ell,h_\ell}(t),\nonumber
\end{align*}
where {\color{black}$e_{j,u}^{\ell,h_\ell}(t) \doteq x_{j,u}^{\ell,h_\ell}(t)-\hat x_{j,u}^{\ell,h_\ell}(t)$ and $e_{j,o}^{\ell,h_\ell}(t) \doteq x_{j,o}^{\ell,h_\ell}(t)-\hat x_{j,o}^{\ell,h_\ell}(t)$}. 
To simplify the notation and highlight the information that is related only to the detectable parts of the state estimates, we can rewrite the previous equation as follows:
\begin{align}\label{eilh2}
        e_{i}^{\ell,h_\ell}(t+1)&=(1 -  k^{\ell,h_\ell}[\mathcal{L}]_{i,i})
         A^{\ell,h_\ell}e_{i}^{\ell,h_\ell}(t)\\
         &- k^{\ell,h_\ell}
        A^{\ell,h_\ell}\sum_{j\in {\mathcal N}_i \cap\mathcal{V}_{1}^{\ell,h_\ell}}[\mathcal{L}]_{i,j}e_j^{\ell,h_\ell}(t) \nonumber
        \\&- k^{\ell,h_\ell}
        \sum_{j\in  {\mathcal N}_i \cap\mathcal{V}_{2}^{\ell,h_\ell}}[\mathcal{L}]_{i,j} A^{\ell, h_\ell} (S_j^{\ell, h_\ell})^\top {\color{black}e_{j,u}^{\ell,h_\ell}(t)}\nonumber
        \\&+  \Gamma_{i,*}^{\ell,h_\ell}\eta_d(t),\nonumber
\end{align}
where {\color{black}$A^{\ell, h_\ell} (S_j^{\ell, h_\ell})^\top
    =    [(A^{\ell,h_\ell}_{j,u})^\top \ \mathbb{0}]^\top$}, 
 while the term $\Gamma_{i,*}^{\ell,h_\ell}\eta_d(t)$ 
 groups together the terms  $-k^{\ell,h_\ell}
        [\mathcal{L}]_{i,j}$ ${\color{black}[(A^{\ell,h_\ell}_{j,*})^\top\ (A^{\ell,h_\ell}_{j,o})^\top]^\top}e_{j,2o}^{\ell,h_\ell}(t)$ corresponding to the indices $j\in 
        \mathcal{V}_{2}^{\ell,h_\ell}$, and the terms
        $- k^{\ell,h_\ell}
        A^{\ell,h_\ell}[\mathcal{L}]_{i,j}e_j^{\ell,h_\ell}(t)$ corresponding to the indices $j\in \mathcal{V}_{3}^{\ell,h_\ell}.$

$\bullet$  \underline{For  $i\in {\mathcal V}_2^{\ell, h_\ell}$}, the estimate of ${\color{black}x_{i,u}^{\ell, h_\ell}(t)} = S_i^{\ell, h_\ell} x^{\ell, h_\ell}(t)$, the upper part of
$x^{\ell, h_\ell}(t)$, updates as:
      {\color{black} \begin{align*}
 \hat x_{i,u}^{\ell,h_\ell}&(t+1)=
        A_{i,u}^{\ell,h_\ell}\
        \hat x_{i,u}^{\ell,h_\ell}(t)
        +A_{i,*}^{\ell,h_\ell}\
        \hat x_{i,o}^{\ell,h_\ell}(t)
+B_{i,u}^{\ell,h_\ell} u(t)\\&+k^{\ell,h_\ell}
    \begin{bmatrix}A_{i,u}^{\ell,h_\ell} &A_{i,*}^{\ell,h_\ell} \end{bmatrix} \sum_{j\in\mathcal{N}_i}[\mathcal{A}]_{i,j}(\hat x_j^{\ell,h_\ell}(t)-\hat x_i^{\ell,h_\ell}(t)).\nonumber
\end{align*}}
Thus the corresponding estimation error updates as
        {\color{black}\begin{align*}%\label{eilh}
        e_{i,u}^{\ell,h_\ell}&(t+1)=(1 -  k^{\ell,h_\ell}[\mathcal{L}]_{i,i})
         A^{\ell,h_\ell}_{i,u}\ e_{i,u}^{\ell,h_\ell}(t) \nonumber
         \\&+  (1 -  k^{\ell,h_\ell}[\mathcal{L}]_{i,i}) A_{i,*}^{\ell,h_\ell}\ e_{i,o}^{\ell,h_\ell}(t) \nonumber
         \\&-k^{\ell,h_\ell}
    \begin{bmatrix}A_{i,u}^{\ell,h_\ell} &A_{i,*}^{\ell,h_\ell} \end{bmatrix} \sum_{j\in\mathcal{N}_i}[\mathcal{L}]_{i,j}\ e_j^{\ell,h_\ell}(t) \nonumber\\
    &=(1 -  k^{\ell,h_\ell}[\mathcal{L}]_{i,i})
         A^{\ell,h_\ell}_{i,u}\ e_{i,u}^{\ell,h_\ell}(t)\\
         \\&-k^{\ell,h_\ell}
    \begin{bmatrix}A_{i,u}^{\ell,h_\ell} &A_{i,*}^{\ell,h_\ell} \end{bmatrix} \sum_{j\in {\mathcal N}_i \cap\mathcal{V}_1^{\ell, h_\ell}}[\mathcal{L}]_{i,j}\ e_j^{\ell,h_\ell}(t) \nonumber
    \\&-k^{\ell,h_\ell}
    \sum_{j\in {\mathcal N}_i \cap\mathcal{V}_2^{\ell, h_\ell}} [\mathcal{L}]_{i,j}\ \begin{bmatrix}A_{i,u}^{\ell,h_\ell} &A_{i,*}^{\ell,h_\ell} \end{bmatrix} (S_j^{\ell,h_\ell})^\top
    e_{j,u}^{\ell,h_\ell}(t) \nonumber
    \\& +  \Gamma_{i,*}^{\ell,h_\ell} \eta_d(t)
 \nonumber\\
 &=(1 -  k^{\ell,h_\ell}[\mathcal{L}]_{i,i})
         S_i^{\ell, h_\ell} A^{\ell,h_\ell} (S_i^{\ell, h_\ell} )^\top \ e_{i,u}^{\ell,h_\ell}(t)\\
         \\&-k^{\ell,h_\ell}
    S_i^{\ell, h_\ell} A^{\ell,h_\ell}  \sum_{j\in {\mathcal N}_i \cap\mathcal{V}_1^{\ell, h_\ell}}[\mathcal{L}]_{i,j}\ e_j^{\ell,h_\ell}(t) \nonumber
    \\&-k^{\ell,h_\ell}
    \sum_{j\in {\mathcal N}_i \cap\mathcal{V}_2^{\ell, h_\ell}} [\mathcal{L}]_{i,j}\ S_i^{\ell, h_\ell} A^{\ell,h_\ell}  (S_j^{\ell,h_\ell})^\top
    e_{j,u}^{\ell,h_\ell}(t) \nonumber
    \\& + \Gamma_{i,*}^{\ell,h_\ell} \eta_d(t),
 \nonumber
\end{align*}}
where, again, $\Gamma_{i,*}^{\ell,h_\ell} \eta_d(t)$ keeps into account the terms that belong to the detectable parts of the state vectors both for agent $i$ and for its neighbors, i.e., those depending on $e_{i,o}^{\ell, h_\ell}, e_{j,o}^{\ell, h_\ell}, j\in {\mathcal V}_2^{\ell, h_\ell}\cap {\mathcal N}_i,$ and $e_j^{\ell, h_\ell}(t), j\in {\mathcal V}_3^{\ell,h_\ell}$. 

 If we denote by $\eta_u^{\ell,h_\ell}(t)$  the portion of the estimation error 
$\eta_u(t)$ that corresponds to the Jordan miniblock $A^{\ell, h_\ell}$ and we assume that $${\mathcal V}_1^{\ell, h_\ell} \cup {\mathcal V}_2^{\ell, h_\ell} \doteq \{ i_1, i_2, \dots, i_c\},$$ 
where $c  \doteq {\rm card}({\mathcal V}_1^{\ell, h_\ell} \cup {\mathcal V}_2^{\ell, h_\ell})$ and $1 \le i_1 < i_2 < \dots < i_c \le N$, then 
$\eta_u^{\ell,h_\ell}(t)$ consists of the following blocks (in order)  $S_{i_1}^{\ell,h_\ell} e_{i_1}^{\ell,h_\ell}(t), S_{i_2}^{\ell,h_\ell} e_{i_2}^{\ell,h_\ell}(t), \dots S_{i_c}^{\ell,h_\ell} e_{i_c}^{\ell,h_\ell}(t),$ where 
$$
S_{i_k}^{\ell,h_\ell} e_{i_k}^{\ell,h_\ell}(t) \doteq  \begin{cases} e_{i_k}^{\ell,h_\ell}(t), & {\rm if} \ {\color{black}i_k}\in {\mathcal V}_1^{\ell, h_\ell};\\[2ex]
e_{i_k, 2u}^{\ell,h_\ell}(t),& {\rm if} \ {\color{black}i_k}\in {\mathcal V}_2^{\ell, h_\ell}.
\end{cases}
$$ 
Consequently
\begin{align} \label{eq.error2}
\eta_{u}^{\ell,h_\ell}(t+1)=\bar \Gamma_u^{\ell,h_\ell} \eta_{u}^{\ell,h_\ell}(t),
\end{align}
where
$\bar \Gamma_u^{\ell,h_\ell}$
takes the following form
\begin{equation}\bar \Gamma_u^{\ell,h_\ell} \doteq 
S^{\ell, h_\ell} 
\left(\left(I - k^{\ell,h_\ell}\mathcal{ L}^{\ell,h_\ell}\right)\otimes A^{\ell,h_\ell}\right)
(S^{\ell, h_\ell})^\top,
\label{defGammau}
\end{equation}
$\mathcal{L}^{\ell,h_\ell}$ is the  matrix obtained from the Laplacian ${\mathcal L}$ by deleting the rows and columns   indexed in  $\mathcal{V}_3^{\ell,h_\ell}$ and 
$$S^{\ell,h_\ell} \doteq {\rm diag} \{S_{i_1}^{\ell,h_\ell}, S_{i_2}^{\ell,h_\ell},\dots S_{i_c}^{\ell,h_\ell}\}.$$
We observe that the matrix $\Gamma_u$ in \eqref{eq.triang.und}, whose expression we had left temporarily undetermined, is equivalent up to a block permutation to the matrix
${\rm diag}\{ \bar \Gamma_u^{\ell,{h_\ell}}\}_{\color{black} (\ell,h_\ell): {\mathcal V}_3^{\ell,h_\ell} \subsetneq [1,N]},$
where
$\bar \Gamma_u^{\ell,h_\ell}$ is defined in \eqref{defGammau}. Therefore in order to determine under what conditions  $\eta_u(t)$ asymptotically converges to zero, or equivalently $\Gamma_u$ is Schur stable, we will identify under what conditions  each $\bar \Gamma_u^{\ell, h_\ell}$ is Schur stable.
When this is the case, there exists a distributed observer with the structure we have proposed.

The following theorem therefore provides   necessary and sufficient conditions for the existence of a distributed observer.
\medskip

\begin{theorem}\label{thmain}
    Consider the system described by \eqref{eq.sys} and \eqref{eq.out}, with directed communication graph $\mathcal{G}=(\mathcal{V},\mathcal{E}, \mathcal{A})$.  Let Assumptions \ref{ass.jointdet} and \ref{ass.obs} hold. There exists a distributed observer of the form \eqref{eq.luenberger} and \eqref{eq.estimate.undetectble.reduced}, for every $i\in [1,N]$,  if and only if  for each $(\ell,h_\ell)\in [1,r_u]\times [1, g_\ell]$  there exists $k^{\ell,h_\ell}\in\mathbb{R}$ such that
\begin{subequations}\label{eq.mainineq}
\begin{align}
    &|1 - k^{\ell,h_\ell}\mu|<\frac{1}{|\lambda_\ell|},\label{eq.mainineq.ineq}
    \\& \forall\ \mu \in\sigma\left(S^{\ell, h_\ell}\left(\mathcal{L}^{\ell,h_\ell}\otimes I_{d^{\ell,h_\ell}}\right)(S^{\ell, h_\ell})^\top\right).\label{eq.mainineq.eig}
\end{align}
\end{subequations}

\end{theorem}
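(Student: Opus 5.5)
The argument reduces the existence of an observer of the stated form to the Schur stability of each $\bar\Gamma_u^{\ell,h_\ell}$, and then computes the spectrum of that matrix explicitly. By the block triangular structure \eqref{eq.triang.und}, with $\Gamma_d$ Schur by construction of the gains $L_{i,d}$, all errors vanish if and only if $\Gamma_u$ is Schur. Since $\Gamma_u$ is, up to a block permutation, ${\rm diag}\{\bar\Gamma_u^{\ell,h_\ell}\}$, and the gains $k^{\ell,h_\ell}$ can be chosen independently for distinct miniblocks, an observer exists if and only if, for each $(\ell,h_\ell)$, some $k^{\ell,h_\ell}$ makes $\bar\Gamma_u^{\ell,h_\ell}$ Schur. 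What remains is to show that $\sigma(\bar\Gamma_u^{\ell,h_\ell}) = \{\lambda_\ell(1-k^{\ell,h_\ell}\mu)\}$, with $\mu$ ranging over the spectrum in \eqref{eq.mainineq.eig}. The condition $|\lambda_\ell||1-k\mu|<1$ is exactly \eqref{eq.mainineq.ineq}; here $\lambda_\ell\neq 0$ because $|\lambda_\ell|\ge 1$.

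To compute the spectrum, write $A^{\ell,h_\ell}=\lambda_\ell I + J$, where $J$ is the nilpotent upper shift. Then
\[
\bar\Gamma_u^{\ell,h_\ell} = \lambda_\ell\, S\big((I-k\mathcal L^{\ell,h_\ell})\otimes I\big)S^\top + S\big((I-k\mathcal L^{\ell,h_\ell})\otimes J\big)S^\top .
\]
Since each $S_i$ is either the identity or a row selection of the leading coordinates, $SS^\top=I$, so the first term equals $\lambda_\ell\big(I-kM\big)$ with $M\doteq S(\mathcal L^{\ell,h_\ell}\otimes I)S^\top$. The claim is that, after a suitable reordering of the coordinates, both terms become block triangular, and the second term becomes strictly block triangular.

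To see this, reorder the entries of $\eta_u^{\ell,h_\ell}$ by position index within the miniblock. Group $p$ collects, for every node $i$ with $S_i$ retaining coordinate $p$, the $p$th entry of $S_ie_i$. In this ordering $M$ is block diagonal, and its $p$th diagonal block is the principal submatrix of $\mathcal L^{\ell,h_\ell}$ indexed by the nodes that retain coordinate $p$. Hence $\sigma(M)$ is the union of the spectra of these principal submatrices. The matrix $J$ maps coordinate $p+1$ to coordinate $p$. The nodes retaining coordinate $p+1$ form a subset of those retaining $p$, because the kept coordinates are always a leading segment. Therefore the second term has nonzero blocks only in positions (group $p$, group $p+1$), that is, it is strictly block upper triangular in this ordering. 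The whole matrix is thus block upper triangular, with diagonal blocks $\lambda_\ell(I-kM_p)$, and its eigenvalues are $\lambda_\ell(1-k\mu)$ for $\mu\in\sigma(M)$. This yields both directions at once: the condition is sufficient by choosing the $k^{\ell,h_\ell}$ satisfying \eqref{eq.mainineq}, and necessary because Schur stability of each $\bar\Gamma_u^{\ell,h_\ell}$ is required.

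The main obstacle is the bookkeeping in this triangularization. One has to check carefully that $S$ applied to the Kronecker products gives the claimed structure for nodes in $\mathcal V_2^{\ell,h_\ell}$ with different truncation lengths $t_i-1$. In particular, the coupling $S_i A (S_j)^\top$ between nodes with different lengths must also respect the ordering by coordinate index. That holds because $A$ is upper triangular: it maps coordinate $q$ only into coordinates $q$ and $q-1$.
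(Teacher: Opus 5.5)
Your proposal is correct and follows essentially the same route as the paper. The block-triangular error dynamics reduce the question to Schur stability of each $\bar\Gamma_u^{\ell,h_\ell}$. Then, exactly as in Lemma~\ref{lem.1}, regrouping the entries by their position inside the miniblock makes the matrix block upper triangular with diagonal blocks $\lambda_\ell\bigl(I - k^{\ell,h_\ell}\tilde S_{r_p}\mathcal{L}^{\ell,h_\ell}\tilde S_{r_p}^\top\bigr)$. Your splitting $A^{\ell,h_\ell}=\lambda_\ell I+J$ is only a cosmetic repackaging of the paper's use of $[A]_{i,i}=\lambda$ and $[A]_{i,j}=0$ for $i>j$; the nesting of retained coordinates you point out is true but not needed for the triangularity.
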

\begin{proof}
We preliminarily observe that Assumption \ref{ass.obs} ensures that each pair $(F_{i,d}, H_{i,d}), i\in [1,N],$ is detectable.
Therefore, there exists a matrix $L_{i,d}$ that makes  $F_{i,d}-L_{i,d} H_{i,d}$  Schur, and hence $\Gamma_d$ in \eqref{eq.triang.und} is a Schur matrix. This guarantees that each observer \eqref{eq.luenberger} asymptotically estimates $z_{i,d}(t)$, $i\in [1, N]$. Therefore a distributed observer of the form \eqref{eq.luenberger} and \eqref{eq.estimate.undetectble.reduced} exists if and only if we are able to ensure that for every 
$(\ell,h_\ell)\in [1,r_u]\times [1, g_\ell]$ the estimation error in \eqref{eq.error2}  asymptotically converges to zero. But this is equivalent to guaranteeing the existence, for every $(\ell,h_\ell)\in [1,r_u]\times [1, g_\ell]$  of a real coefficient $k^{\ell, h_\ell}$ that makes the matrix
    $\bar \Gamma_u^{\ell,h_\ell} = 
S^{\ell, h_\ell} 
\left(\left(I - k^{\ell,h_\ell}\mathcal{ L}^{\ell,h_\ell}\right)\otimes A^{\ell,h_\ell}\right)
(S^{\ell, h_\ell})^\top$ Schur stable.
\\
By Lemma \ref{lem.1}, in the Appendix, this is possible if and only if  for every $(\ell,h_\ell)\in [1,r_u]\times [1, g_\ell]$  there exists $k^{\ell,h_\ell}\in\mathbb{R}$ 
such that \eqref{eq.mainineq} holds.
\end{proof}

\begin{remark} \label{explaineigenvalues}
   The spectrum of $S^{\ell, h_\ell}\left(\mathcal{L}^{\ell,h_\ell}\otimes I_{d^{\ell,h_\ell}}\right)$ $(S^{\ell, h_\ell})^\top$ coincides with the union of the spectra of all matrices 
    $\tilde S_{r_i} {\mathcal L}^{\ell, h_\ell} \tilde S_{r_i}^\top,$ for all $i\in [1, d^{\ell,h_\ell}]$, where   $\tilde S_{r_i}$ (see the proof of Lemma \ref{lem.1}) is the selection matrix including all row vectors ${\mathbb e}_j^\top$ corresponding  to the agents $j \in {\mathcal V}_1^{\ell, h_\ell} \cup {\mathcal V}_2^{\ell, h_\ell}$ for which the $i$th entry of $x^{\ell, h_\ell}(t)$ is undetectable (and, therefore $\tilde S_{r_1}=I$).
    This implies that condition \eqref{eq.mainineq.ineq} must be checked for all the eigenvalues of ${\mathcal L}^{\ell, h_\ell}$ and for all the eigenvalues of a family of principal submatrices of ${\mathcal L}^{\ell, h_\ell}$.
\end{remark}
\medskip

\section{Second Strategy}\label{sec.strategy2}
The strategy we have just explored for the distributed state estimation requires the existence of real parameters $k^{\ell, h_\ell}$ that satisfy 
condition \eqref{eq.mainineq.ineq} for a large set of eigenvalues, namely (see \eqref{eq.mainineq.eig}) all the eigenvalues of the matrix $S^{\ell,h_\ell}\left(\mathcal{L}^{\ell,h_\ell}\otimes I_{d^{\ell,h_\ell}}\right)(S^{\ell,h_\ell})^\top.$
We now explore an alternative strategy to design a distributed observer that will lead to the same condition as in \eqref{eq.mainineq.ineq} but for a typically much smaller family of eigenvalues namely those in 
$\sigma(\mathcal{L}^{\ell,h_\ell}) \subseteq
\sigma(S^{\ell,h_\ell}\left(\mathcal{L}^{\ell,h_\ell}\otimes I_{d^{\ell,h_\ell}}\right)(S^{\ell,h_\ell})^\top)$
(see Remark \ref{explaineigenvalues}).
The price to pay will be the need to resort to  augmented observers since the detectable and the undetectable parts of the state estimated by each agent will possibly overlap. Specifically, we will first estimate the detectable portion of the state
$z_{i,d}(t)$ using the same Luenberger observer \eqref{eq.luenberger} as in the first strategy, but we will retain only a portion of it (the one corresponding to the blocks $x^{\ell, h_\ell}(t)$ that are completely observable or correspond to stable eigenvalues). On the other hand, each $i$th agent will provide an estimate of the blocks $x^{\ell, h_\ell}(t)$ that are either completely unobservable or partly observable for it by using a consensus strategy. Differently from the observers \eqref{eq.estimate.undetectble.reduced}, only entire miniblocks $A^{\ell, h_\ell}$ will appear in the matrix governing the estimation error dynamics and not portions of it, say $S_i^{\ell, h_\ell} A^{\ell, h_\ell} (S_j^{\ell, h_\ell})^\top$. As an additional advantage, the dynamics of the estimation error relative to $x^{\ell, h_\ell}(t)$ will depend only on the estimation errors relative to the same block thus leading to a ``block diagonal structure" that we did not have in \eqref{eq.estimate.undetectble.reduced} (see $F_{i,*}$).  

%{\bf Alternative decomposition:}\ 
To derive the new distributed observer equations, we  introduce another permutation matrix, alternative to $Q_i$. After having 
performed Step 1 as in Section \ref{Sec.Reordering}, and hence
introduced the sets ${\mathbb G}_{i,k}^\ell, k\in[1,3],$ and the block diagonal matrices $A_{i,k}^\ell, k\in [1,3],$ for every $i\in [1,N]$ and every $
\ell\in [1,r_u],$  
we adopt this\footnote{\color{black}The new permuted form of  $A$ we will adopt for each agent $i$ differs from the one in \eqref{eq.ste3.perm} only in the way we treat the miniblocks $A^{\ell, h_\ell}$ corresponding to partly observable state variables, i.e., for $h_\ell\in {\mathbb G}_{i,2}^{\ell, h_\ell}$. Such form   could also be obtained from the one in \eqref{eq.ste3.perm} by  permutating the central blocks.} 

{\bf Alternative Step 2: Grouping of the blocks $A_{i,k}^{\ell}$ for the same value of $k
\in [1,3]$ and different values of $\ell\in [1,r_u]$.}\\
We let $R_i$ denote a permutation matrix that orders the unstable miniblocks of the matrix $A$ according to the sets $\mathbb{G}_{i,1}^{\ell}$, $\mathbb{G}_{i,2}^{\ell}$, and $\mathbb{G}_{i,3}^{\ell}$. 
This amounts to saying that
\begin{align}
{R_i}^\top AR_i&=\begin{bmatrix}
    A_{i,u}&\vline& {\mathbb 0}\\
    \hline
    \mathbb{0}&\vline& A_{i,d}
\end{bmatrix} = \begin{bmatrix}
    A_{i,1}&\mathbb{0} &\vline&\mathbb{0}&\mathbb{0}\\
    \mathbb{0}&A_{i,2}&\vline& \mathbb{0} &\mathbb{0}\\
    \hline
    \mathbb{0}&  \mathbb{0}&\vline& A_{i,3}&\mathbb{0}\\
        \mathbb{0}&  \mathbb{0}&\vline&\mathbb{0}&A_s
\end{bmatrix}, \nonumber\\
C_iR_i&=  \begin{bmatrix}C_{i,u}&\vline&C_{i,d}\end{bmatrix}=
 \begin{bmatrix}
    \mathbb{0}&C_{i,2}&\vline& C_{i,3}&C_{s}
\end{bmatrix}, \nonumber\\
{R_i}^\top B &=
\begin{bmatrix}
     B_{i,u}\\
     \hline
     B_{i,d}
 \end{bmatrix} = 
\begin{bmatrix}
    B_{i,1}\\
    B_{i,2}\\
    \hline
    B_{i,3}\\
    B_{s}
\end{bmatrix},  \nonumber
\end{align}
where $A_{i,2}\doteq{\rm diag}\{A_{i,2}^{\ell}\}_{\ell\in [1,r_u]}$, $ B_{i,2}\doteq{\rm col}\{{B_{i,2}^{\ell}}\}_{\ell\in [1,r_u]}$, and $ C_{i,2}\doteq{\rm row}\{{C_{i,2}^{\ell}}\}_{\ell\in [1,r_u]}$, while all the other blocks have been defined in Step 3.
This corresponds to splitting the (permuted) state vector as follows:
\begin{align}
{R_i}^\top x(t)=\begin{bmatrix}
    x_{i,u}(t)\\
    \hline
    x_{i,d}(t)
\end{bmatrix}\doteq\begin{bmatrix}
    x_{i,1}(t)\\
    x_{i,2}(t)\\
    \hline
    x_{i,3}(t)\\
    x_{i,s}(t)
\end{bmatrix},
\label{defxidu}
\end{align}
where  $x_{i,2}(t)\doteq{\rm col}\{{x_{i,2}^{\ell}(t)}\}_{\ell\in [1,r_u]}$ (see Step 3 for the other blocks). % \medskip

Agent $i$ will separately estimate $x_{i,d}(t)$ and $x_{i,u}(t)$ using two different observers.
The estimate $\hat x_{i,d}(t)$ will be obtained from the estimate  $\hat z_{i,d}(t)$ derived in  \eqref{eq.luenberger},
by exploiting the relationship between $x_{i,d}(t)$ and $z_{i,d}(t)$ (see \eqref{eq.zdzu} and \eqref{defxidu}):
\begin{equation}x_{i,d}(t)= 
    \begin{bmatrix}
        \mathbb{0}&\vline & I
    \end{bmatrix}
    \begin{bmatrix}
     x_{i,2o}(t)\\
     \hline
     x_{i,3}(t)\\
     x_{i,s}(t)
\end{bmatrix} = \begin{bmatrix}
        \mathbb{0}&\vline&I
    \end{bmatrix}  z_{i,d}(t).
    \label{eq:zdxd}
    \end{equation}
Meanwhile the estimate  $\hat x_{i,u}(t)$ will be obtained using the estimates   of the blocks $x^{\ell, h_\ell}(t)$ appearing in $x_{i,u}(t)$ provided by the neighbors of the $i$th agent. If we let $\hat x_{j}^{\ell, h_\ell}(t)$ denote the estimate that agent $j$ provides for the block $x^{\ell, h_\ell}(t)$, then for each $j \in {\mathcal N}_i$, the vector
$\hat x_{j}^{\ell, h_\ell}(t)$
may be part   either of $\hat x_{j,d}(t)$ or of $\hat x_{j,u}(t)$, depending on whether $h_\ell \in {\mathbb G}_{j,3}^\ell$ or $h_\ell \in {\mathbb G}_{j,1}^\ell \cup {\mathbb G}_{j,2}^\ell$ (equivalently, $j\in {\mathcal V}_3^{\ell, h_\ell}$ or $j\in {\mathcal V}_1^{\ell, h_\ell}\cup {\mathcal V}_2^{\ell, h_\ell}$).

Details are provided in the following. 
 The final estimate that agent $i$ provides of the state $x(t)$ will be
\begin{align*}
{\color{black}\hat x_i(t) = R_i \begin{bmatrix}
    \hat x_{i,u}(t)\cr\hline    
    \hat x_{i,d}(t)
\end{bmatrix}.}
\end{align*}
%\smallskip
{\bf Estimate of $x_{i,d}(t)$:}\
The
estimate of $x_{i,d}(t)$ provided  by agent $i$ relies on the Luenberger observer \eqref{eq.luenberger} and on the identity \eqref{eq:zdxd}, and is therefore described  by the following state-space  model:
\begin{subequations}\label{eq.lunberger.directed}
\begin{align}
    \hat z_{i,d}(t+1)=&F_{i,d}\hat z_{i,d}(t)+G_{i,d}u(t)\\
    &+L_{i,d}(y_i(t)-H_{i,d}\hat z_{i,d}(t)),\nonumber\\
    \hat x_{i,d}(t) =& \begin{bmatrix}
        \mathbb{0}&I
    \end{bmatrix} \hat z_{i,d}(t).\label{eq.lunberger.directed.b}
\end{align}
\end{subequations}
As noted previously, the detectability of the pair $(F_{i,d}, H_{i,d})$ ensures that there exists $L_{i,d}$ such that $F_{i,d}- L_{i,d} H_{i,d}$ is Schur stable, and hence $\eta_{i,d}(t)= z_{i,d}(t) - \hat z_{i,d}(t)$ asymptotically converges to $0$.  Therefore
$$e_{i,d}(t) \doteq x_{i,d}(t) - \hat x_{i,d}(t)= \begin{bmatrix}
        \mathbb{0}&I
    \end{bmatrix}  \eta_{i,d}(t)$$
asymptotically converges to $0$, in turn.

{\bf Estimate of $x_{i,u}(t)$:}\
We design an observer of the type
\begin{align}
   \hat  x_{i,u}(t+1)=&A_{i,u}\hat x_{i,u}(t)+B_{i,u}u(t) \label{eq.xiuhat}\\
&+K_iA_{i,u}\sum_{j\in\mathcal{N}_i}[\mathcal{A}]_{i,j}\begin{bmatrix}
        I&\mathbb{0}
    \end{bmatrix}R_i^\top(\hat x_j(t)-\hat x_i(t))
    \nonumber
\end{align}
 where  $K_i$ is a block-diagonal matrix to be designed in the following form:
 %{\small
 \begin{align*}
  K_i \doteq \begin{bmatrix}{\rm diag} \left\{ K_{i,1}^\ell\right\}_{\ell\in [1,r_u]} & \mathbb{0} \cr \mathbb{0}&{\rm diag} \left\{\bar K_{i,2}^\ell\right\}_{\ell\in [1,r_u]  }\end{bmatrix}
\end{align*} %}
where $K_{i,1}^\ell$ is defined as in \eqref{Kappa1}, while
\begin{align}
\bar K_{i,2}^\ell\doteq&\ {\rm diag} \left\{ k^{\ell,h_\ell} I_{d^{\ell, h_\ell}}\right\}_{ h_\ell \in {\mathbb G}_{i,2}^\ell}
\label{Kappa2bar}
\end{align}
 differs from \eqref{Kappa2}
 only in the dimensions ($d_{\ell,h_\ell} > t_i^{\ell, h_\ell} -1).$ 
This amounts to choosing a  gain $k^{\ell,h_\ell}$ for each Jordan miniblock $A^{\ell, h_\ell}$ (equivalently, for each block $x^{\ell, h_\ell}(t)$ and not for portions of it that vary with the agent), 
corresponding to some $(\ell, h_\ell)$ for which ${\mathcal V}_3^{\ell, h_\ell} \subsetneq [1,N].$
\\ In order to study the dynamics of the estimation error 
$e_{i,u} (t) \doteq x_{i,u}(t)-\hat x_{i,u}(t)$, we exploit the block diagonal structure of the matrices. Indeed, it is easy to deduce from \eqref{eq.xiuhat} that for every $(\ell, h_\ell)$ for which ${\mathcal V}_3^{\ell, h_\ell} \subsetneq [1,N]$, the  portion of the vector $\hat x_{i,u}(t)$ associated with the miniblock $A^{\ell,h_\ell}$, i.e., $\hat x_{i}^{\ell,h_\ell}(t)$, updates  according to\footnote{This equation is the same as \eqref{eq.estimate.undetectble.directed}. However, equation \eqref{eq.estimate.undetectble.directed} was valid only for $h_\ell\in {\mathbb G}_{i,1}^{\ell}$ and in addition it required to split each $
\hat x_j^{\ell, h_\ell}$ for $j\in {\mathcal N}_i \cap {\mathcal V}_2^{\ell, h_\ell},$ while as we will see this will no longer be necessary in this new set-up.}
\begin{align}
        \hat x_{i}^{\ell,h_\ell}&(t+1)=
        A^{\ell,h_\ell}
        \hat x_{i}^{\ell,h_\ell}(t)
+B^{\ell,h_\ell} u(t)\\&+k^{\ell,h_\ell}
        A^{\ell,h_\ell}\sum_{j\in\mathcal{N}_i}[\mathcal{A}]_{i,j}(\hat x_j^{\ell,h_\ell}(t)-\hat x_i^{\ell,h_\ell}(t)).\nonumber
\end{align}
Since the dynamics of the portion of the state $x^{\ell,h_\ell}(t)$ associated with the miniblock $A^{\ell,h_\ell}$ is
\begin{align}\label{eq.miniblock}
         x^{\ell,h_\ell}(t+1)=A^{\ell,h_\ell}x^{\ell,h_\ell}(t)+B^{\ell,h_\ell} u(t),
\end{align}
then the dynamics of the  estimation error $e_i^{\ell,h_\ell}(t) = x^{\ell,h_\ell}(t)-\hat x_i^{\ell,h_\ell}(t)$ is
\begin{align}
     e_{i}^{\ell,h_\ell}&(t+1)=A^{\ell,h_\ell}
         e_{i}^{\ell,h_\ell}(t)\\
         &- k^{\ell,h_\ell}
        A^{\ell,h_\ell}\sum_{j\in \mathcal{ N}_{i}}[\mathcal{A}]_{i,j}(e_i^{\ell,h_\ell}(t)-e_j^{\ell,h_\ell}(t)),
        \nonumber
\end{align}
and 
we can rewrite the previous equation as 
\begin{align}\label{eq.est.err.lap.dir}
     e_{i}^{\ell,h_\ell}(t+1)=&(1 - k^{\ell,h_\ell}[\mathcal{L}]_{i,i})A^{\ell,h_\ell}
         e_{i}^{\ell,h_\ell}(t)\\
         &- k^{\ell,h_\ell}
        A^{\ell,h_\ell}\sum_{j\in \mathcal{ N}_{i}}[\mathcal{L}]_{i,j}e_j^{\ell,h_\ell}(t).\nonumber
\end{align}
The dynamics of the error $e_{i}^{\ell,h_\ell}(t)$ of the $i$th agent is influenced by the estimation errors $e_j^{\ell,h_\ell}(t)$ of its neighbors ($j\in {\mathcal N}_i$), which can be split into two categories: those for which the $h_\ell$th miniblock is completely observable, namely  $j \in \mathcal{V}_{3}^{\ell,h_\ell}$, and the others. 
Therefore,  the sum in \eqref{eq.est.err.lap.dir} can be rewritten as 
    \begin{align}\label{eilh.aug}
        e_{i}^{\ell,h_\ell}(t+1)&=(1 -  k^{\ell,h_\ell}[\mathcal{L}]_{i,i})
         A^{\ell,h_\ell}e_{i}^{\ell,h_\ell}(t)\\
         &- k^{\ell,h_\ell}
        A^{\ell,h_\ell}\sum_{j\in {\mathcal N}_i \cap(\mathcal{V}_{1}^{\ell,h_\ell} \cup \mathcal{V}_{2}^{\ell,h_\ell})}[\mathcal{L}]_{i,j}e_j^{\ell,h_\ell}(t) \nonumber
        \\&- k^{\ell,h_\ell}
        A^{\ell,h_\ell}\sum_{j\in {\mathcal N}_i \cap\mathcal{V}_{3}^{\ell,h_\ell}}[\mathcal{L}]_{i,j}e_j^{\ell,h_\ell}(t).\nonumber
\end{align}
We observe that if we define {\color{black}$e_u(t) \doteq {\rm col}\{e_{i,u}(t)\}_{i\in [1,N]}$, $e_d(t) \doteq {\rm col}\{e_{i,d}(t)\}_{i\in [1,N]}$, $\eta_d(t) \doteq {\rm col}\{\eta_{i,d}(t)\}_{i\in [1,N]}$}, 
then the dynamics of the estimation error 
{\color{black}$[e^\top_u(t)\ e^\top_d(t)]^\top$} 
updates according to an equation of the following form:
\begin{align}
\label{eq.triang}
\begin{bmatrix} e_u(t+1)\cr \eta_d(t+1)\end{bmatrix}
&= \begin{bmatrix} \Phi_u & \Phi_*S \cr {\mathbb 0} & \Gamma_d\end{bmatrix}
\begin{bmatrix} e_u(t)\cr \eta_d(t)\end{bmatrix}\\
\begin{bmatrix}
    e_u(t)\\
    e_d(t)
\end{bmatrix}&=\begin{bmatrix}
    I&\mathbb{0}\\ \mathbb{0}&S
\end{bmatrix}\begin{bmatrix} e_u(t)\cr \eta_d(t)\end{bmatrix}.
\end{align}
where $\Gamma_d \doteq {\rm diag} \{ F_{i,d} - L_{i,d}H_{i,d}\}_{i\in [1,N]}$ is Schur,   $S$ is a selection matrix\footnote{It is actually a block diagonal matrix whose diagonal blocks are the matrices $\begin{bmatrix} {\mathbb 0} & I\end{bmatrix}$ appearing in \eqref{eq:zdxd}.}, $\Phi_*$ and $\Phi_u$ can be obtained by using of \eqref{eilh.aug}. Note that the expressions of $S$ and $\Phi_*$ are irrelevant for the subsequent discussion. We will come later to the expression of $\Phi_u$.\\
The convergence to zero of $\eta_{d}(t)$ and consequently of $e_{d}(t)$  (and, hence, of  every $e_{i,d}(t)$) has already been discussed. On the other hand, due to the block triangular structure of the matrix in \eqref{eq.triang}, 
condition 
 \begin{align}
 \label{eq.tuttinulli}
    \lim_{t\to+\infty}e_{i}^{\ell,h_\ell}(t)=\mathbb{0},\quad \forall\ i\in[1,N]
 \end{align}
holds 
for every $(\ell, h_\ell) \in  [1,r_u]\times  [1,g_\ell]$ and 
$i\in\mathcal{V}_1^{\ell,h_\ell}\cup \mathcal{V}_2^{\ell,h_\ell}$ 
if and only if 
the matrix $\Phi_u$ is Schur, which is equivalent to the asymptotic stability,  for every $(\ell, h_\ell) \in  [1,r_u]\times  [1,g_\ell]$ and $i\in\mathcal{V}_1^{\ell,h_\ell}\cup \mathcal{V}_2^{\ell,h_\ell}$, of the  error dynamics
\begin{align}
       e_{i}^{\ell,h_\ell}(t+1) =&(1 - k^{\ell,h_\ell}[\mathcal{L}]_{i,i})
         A^{\ell,h_\ell}e_{i}^{\ell,h_\ell}(t)\\
         &- k^{\ell,h_\ell}
        A^{\ell,h_\ell}
        \!\!\!\!\!\!\sum_{j\in {\mathcal N}_i \cap(\mathcal{V}_{1}^{\ell,h_\ell} \cup \mathcal{V}_{2}^{\ell,h_\ell})} \!\!\![\mathcal{L}]_{i,j}e_j^{\ell,h_\ell}(t), \nonumber
\end{align} 
or, equivalently, of the linear system
\begin{align} \label{eq.error}
e_{u}^{\ell,h_\ell}(t+1)=\left(\left(I - k^{\ell,h_\ell}\mathcal{ L}^{\ell,h_\ell}\right)\otimes A^{\ell,h_\ell}\right) e_{u}^{\ell,h_\ell}(t),
\end{align}
where  
$e_{u}^{\ell, h_\ell} \doteq {\rm col} \{ e_i^{\ell, h_\ell}(t)\}_{i \in\mathcal{V}_1^{\ell,h_\ell}\cup \mathcal{V}_2^{\ell,h_\ell}}$ and 
 $\mathcal{L}^{\ell,h_\ell}$ is the  matrix obtained from the Laplacian ${\mathcal L}$ by deleting the rows and columns   indexed in  $\mathcal{V}_3^{\ell,h_\ell}$.

The following theorem provides    necessary and sufficient conditions for the asymptotic stability of   system \eqref{eq.error} for all  $(\ell, h_\ell) \in  [1,r_u]\times  [1,g_\ell]$, and hence for the existence of a distributed observer of the form \eqref{eq.lunberger.directed} and \eqref{eq.xiuhat}.
\medskip

\begin{theorem}\label{thmain.aug}
    Consider a system described by \eqref{eq.sys} and \eqref{eq.out}, with directed communication graph $\mathcal{G}=(\mathcal{V},\mathcal{E}, \mathcal{A})$.  Let Assumptions \ref{ass.jointdet} and \ref{ass.obs} hold. There exists a distributed (augmented) observer of the form \eqref{eq.lunberger.directed} and \eqref{eq.xiuhat}, for every $i\in [1,N]$,  if and only if  for all $(\ell,h_\ell)\in [1,r_u]\times [1, g_\ell]$  there exists $k^{\ell,h_\ell}\in\mathbb{R}$ such that
    \begin{subequations}    \label{eq.mainineq.aug}
    \begin{align}
    &|1 - k^{\ell,h_\ell}\mu|<\frac{1}{|\lambda_\ell|},    \label{eq.mainineq.aug.ineq}
  \\& \forall\ \mu \in\sigma\left(\mathcal{L}^{\ell,h_\ell}\right).    \label{eq.mainineq.aug.eig}
    \end{align}
    \end{subequations}
\end{theorem}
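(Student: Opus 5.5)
The argument mirrors the proof of Theorem \ref{thmain}, with the extra simplification that no selection matrices $S^{\ell,h_\ell}$ appear. First I reduce the existence question to Schur stability of the matrices in \eqref{eq.error}; then I characterize that stability through the eigenvalues of $\mathcal{L}^{\ell,h_\ell}$.

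\textbf{Reduction.} By Assumption \ref{ass.obs}, each pair $(F_{i,d},H_{i,d})$ is detectable. Hence gains $L_{i,d}$ exist that make $\Gamma_d$ in \eqref{eq.triang} Schur, and $\eta_d(t)\to\mathbb{0}$, so $e_d(t)\to\mathbb{0}$. Because \eqref{eq.triang} is block triangular, the remaining errors converge for every initial condition if and only if $\Phi_u$ is Schur. As argued before the statement, $\Phi_u$ is block diagonal up to a permutation, with one diagonal block
\[
M^{\ell,h_\ell}\doteq (I-k^{\ell,h_\ell}\mathcal{L}^{\ell,h_\ell})\otimes A^{\ell,h_\ell}
\]
for each pair $(\ell,h_\ell)$ with $\mathcal{V}_3^{\ell,h_\ell}\subsetneq[1,N]$. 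Each block depends only on its own gain $k^{\ell,h_\ell}$. Therefore an observer of the form \eqref{eq.lunberger.directed}--\eqref{eq.xiuhat} exists if and only if, for each such $(\ell,h_\ell)$, some real $k^{\ell,h_\ell}$ makes $M^{\ell,h_\ell}$ Schur. Pairs with $\mathcal{V}_3^{\ell,h_\ell}=[1,N]$ impose no condition, consistently with the convention in footnote \ref{footn:numerok}.

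\textbf{Spectral step.} The spectrum of a Kronecker product is the set of pairwise products of the factors' eigenvalues. Since $\sigma(A^{\ell,h_\ell})=\{\lambda_\ell\}$, this gives
\[
\sigma(M^{\ell,h_\ell})=\{(1-k^{\ell,h_\ell}\mu)\lambda_\ell:\ \mu\in\sigma(\mathcal{L}^{\ell,h_\ell})\}.
\]
Thus $M^{\ell,h_\ell}$ is Schur if and only if $|1-k^{\ell,h_\ell}\mu|\,|\lambda_\ell|<1$ for every $\mu\in\sigma(\mathcal{L}^{\ell,h_\ell})$. Because $\ell\le r_u$, we have $|\lambda_\ell|\ge1$, so $\lambda_\ell\neq0$ and we may divide by $|\lambda_\ell|$, which yields exactly \eqref{eq.mainineq.aug}. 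This is the special case of Lemma \ref{lem.1} with $S^{\ell,h_\ell}=I$, so I could simply invoke that lemma. A direct alternative is to triangularize both factors, taking the Schur form of $\mathcal{L}^{\ell,h_\ell}$ over $\mathbb{C}$ and using that $A^{\ell,h_\ell}$ is upper triangular, and then read off the diagonal of the Kronecker product.

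\textbf{Points needing care.} The eigenvalues $\mu$ may be complex, because $\mathcal{L}^{\ell,h_\ell}$ comes from a directed graph. The modulus in \eqref{eq.mainineq.aug.ineq} must therefore be read in $\mathbb{C}$, while $k^{\ell,h_\ell}$ stays real. I also have to check that $\mathcal{L}^{\ell,h_\ell}$ is nonempty whenever the block is relevant, which holds by definition of the relevant pairs. The only step that is more than bookkeeping is the block diagonal decomposition of $\Phi_u$. It rests on \eqref{eilh.aug}: the error on block $(\ell,h_\ell)$ is driven only by errors on the same block, and the contributions from $\mathcal{V}_3^{\ell,h_\ell}$ enter through $\eta_d$, which sits in the upper-right coupling term $\Phi_*S$. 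I expect this to be straightforward given the derivation preceding the theorem.
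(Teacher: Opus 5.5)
Your proposal is correct and takes essentially the same route as the paper. Detectability of each $(F_{i,d},H_{i,d})$, together with the block-triangular structure of \eqref{eq.triang}, reduces existence to finding, for each relevant $(\ell,h_\ell)$, a real $k^{\ell,h_\ell}$ that makes $(I-k^{\ell,h_\ell}\mathcal{L}^{\ell,h_\ell})\otimes A^{\ell,h_\ell}$ Schur; this is then characterized through the Kronecker-product spectrum, which the paper packages as Corollary~\ref{cor.dir}, i.e., Lemma~\ref{lem.1} with $S=I$.
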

\begin{proof}
We preliminarily observe that Assumption \ref{ass.obs} ensures that each pair $(F_{i,d}, H_{i,d}), i\in [1,N],$ is detectable.
Therefore, there exists a matrix $L_{i,d}$ that makes  $F_{i,d}-L_{i,d} H_{i,d}$  Schur, and hence $\Gamma_d$ is a Schur matrix. This guarantees that each observer \eqref{eq.lunberger.directed} asymptotically estimates $z_{i,d}(t)$ and therefore $x_{i,d}(t)$ for every $i\in [1, N]$. Therefore a distributed observer of the form \eqref{eq.lunberger.directed} and \eqref{eq.xiuhat} exists if and only if we are able to ensure that for every 
$(\ell,h_\ell)\in [1,r_u]\times [1, g_\ell]$ the estimation error in \eqref{eq.error} asymptotically converges to zero. But this is equivalent to guaranteeing the existence, for every $(\ell,h_\ell)\in [1,r_u]\times [1, g_\ell]$,  of a real coefficient $k^{\ell, h_\ell}$ that makes the matrix
    $\left(I -  k^{\ell,h_\ell}\mathcal{ L}^{\ell,h_\ell}\right)\otimes A^{\ell,h_\ell}$ Schur stable.
By Corollary \ref{cor.dir}, in the Appendix, this is possible if and only if  for every $(\ell,h_\ell)\in [1,r_u]\times [1, g_\ell]$  there exists $k^{\ell,h_\ell}\in\mathbb{R}$ 
such that \eqref{eq.mainineq.aug} holds.
\end{proof}

\section{Comparisons } \label{sec.comparison}
At this point we want to compare in detail the distributed observers proposed in this paper with the original in \cite{GaoYang}.

The main advantage of both  distributed observers   is that they are simpler than the  one in \cite{GaoYang} since they both focus on the estimate of 
 subvectors of the state vector corresponding either to entire Jordan miniblocks or to portions of it, rather than considering the estimates of single entries (or pairs of entries in case of  complex eigenvalues).
As a consequence of this choice, the solvability conditions of the problem are considerably simplified for two main reasons. 
First, the number of inequalities to be verified is greatly reduced: instead of one inequality per entry, we now have a single inequality for each miniblock (condition \eqref{eq.mainineq} for the first strategy and  \eqref{eq.mainineq.aug} for the second strategy). This also means that the number of matrices  ($S^{\ell, h_\ell}\left(\mathcal{L}^{\ell,h_\ell}\otimes I_{d^{\ell,h_\ell}}\right)$ $(S^{\ell, h_\ell})^\top$  or ${\mathcal L}^{\ell,h_\ell}$, depending on the strategy) whose spectra must be computed is significantly lower. Second, rather than imposing a single coupling strength $k$
 for all observers, we can relax the existence conditions by allowing a distinct coupling strength $k^{\ell,h_\ell}$ for each pair 
$(\ell, h_\ell)$ (see footnote \ref{footn:numerok}). For all these reasons, conditions \eqref{eq.mainineq}  and  \eqref{eq.mainineq.aug} are less restrictive than condition (30) in \cite{GaoYang}.

Finally, condition C2 in Theorem 2 of \cite{GaoYang} proves to be implicit in \eqref{eq.mainineq}  and  \eqref{eq.mainineq.aug}. Indeed, the existence of $k^{\ell, h_\ell}$ satisfying either of those conditions rules out the possibility that $\mu=0$ is an eigenvalue of ${\mathcal L}^{\ell, h_\ell}$. So, conditions \eqref{eq.mainineq}  and  \eqref{eq.mainineq.aug}  hold true only if (see Lemma 4 in \cite{GaoYang}) for every $(\ell, h_\ell)$ for which ${\mathcal V}_3^{\ell, h_\ell} \subsetneq[1,N]$ it holds what follows: every node indexed in ${\mathcal V}_1^{\ell, h_\ell} \cup {\mathcal V}_2^{\ell, h_\ell}$ is reachable from at least one of the nodes indexed in ${\mathcal V}_3^{\ell, h_\ell}$ by means of a   directed path, namely there exists a ``directed spanning forest" rooted in the agents indexed in ${\mathcal V}_3^{\ell, h_\ell}$.
However, we remark again that compared to condition C2 in Theorem 2 of \cite{GaoYang} this communication constraint is imposed on a (typically much) smaller number of sets.

While the previous comments apply to both strategies,   they differ in the way they treat the portions of the state vectors that correspond to 
miniblocks $A^{\ell, h_\ell}$ that are partly observable for an agent $i$.
Indeed,   the  distributed observer \eqref{eq.lunberger.directed}-\eqref{eq.xiuhat} proposed in Section \ref{sec.strategy2} focuses on the estimate of the subvectors of the state vector corresponding to entire Jordan miniblocks. 
    So, the main distinction is between agents that can estimate the whole subvector $x^{\ell, h_\ell}(t)$ and agents that cannot. This is the   criterion that leads to split, for each $i$th agent, the estimate $\hat x_i(t)$ in the two parts $\hat x_{i,d}(t)$ and  $\hat x_{i,u}(t)$. However, $\hat x_{i,d}(t)$ cannot be estimated independently, but is derived from 
  $\hat z_{i,d}(t)$.   
    This leads to  an augmented state for the observers, as the distributed observer proposed in \cite{GaoYang}.
Indeed,
\begin{align*}
&{\rm dim} (\hat z_{i,d}(t)) + {\rm dim} (\hat x_{i,u}(t))\\
&=n + ({\dim}(\hat z_{i,d}(t))- {\dim}(\hat x_{i,d}(t)) \\
&= n + \sum_{\ell=1}^{r_u}  \sum_{h_\ell\in {\mathbb G}_{i,2}^{\ell, h_\ell}} (d^{\ell, h_\ell} - t_i^{\ell, h_\ell} +1),
\end{align*}
but this brings the advantage that condition \eqref{eq.mainineq.aug.ineq} needs to be satisfied only for  the eigenvalues $\mu$ of 
$\mathcal{L}^{\ell,h_\ell}$ and
    \begin{align*}
       {\rm dim}(\mathcal{L}^{\ell,h_\ell})
       ={\rm card} (\mathcal{V}_1^{\ell,h\ell}) + {\rm card} (\mathcal{V}_2^{\ell,h\ell})\le N-1.
    \end{align*}
On the other hand,
Strategy 1 
has the advantage that  the observers that the agents implement have
the same dimension as the original state (and in that respect it is preferable to the previous one and the one in \cite{GaoYang}), since
${\rm dim} (\hat z_{i,d}(t)) + {\rm dim} (\hat z_{i,u}(t)) =n$.

    The price to pay is that the existence of an asymptotic distributed observer of this type requires condition \eqref{eq.mainineq.ineq} (equivalently
\eqref{eq.mainineq.aug.ineq}) to be satisfied for all the eigenvalues $\mu$ of all the submatrices
$\tilde S_{r_i} {\mathcal L}^{\ell, h_\ell} \tilde S_{r_i}^\top$ (see Remark \ref{explaineigenvalues}). 
In detail, it is possible to prove (but we omit the lengthy details) that  for each $(\ell,h_\ell)\in[1,r_u]\times [1, g_\ell]$, the number of eigenvalues for which \eqref{eq.mainineq.ineq} needs to be verified is (in the worst case) equal to 
$$\sum_{m=1}^{d^{\ell,h_\ell}} (N-m) = N d^{\ell,h_\ell} - \frac{d^{\ell,h_\ell}(d^{\ell,h_\ell}+1)}{2}.$$
This number is still significantly smaller than the number of inequalities that need to be checked in condition (30) of \cite{GaoYang}.

From a practical point of view, 
    the natural approach to the choice of which distributed observer one may want to implement is to  first verify if conditions \eqref{eq.mainineq} hold for every $(\ell, h_\ell) \in [1,r_u]\times [1,g_\ell]$. If, so a distributed non-augmented observer is available. 
    If this is not possible, one may try to adopt Strategy 2 by checking if the weaker conditions \eqref{eq.mainineq.aug} hold for every $(\ell, h_\ell) \in [1,r_u]\times [1,g_\ell]$ and hence design  augmented distributed observers.

\section{Undirected Graph Case}\label{sec.und}

If we now relax our assumption on the communication graph and suppose that $\mathcal{G}=(\mathcal{V},\mathcal{E}, \mathcal{A})$ is an undirected graph, and hence ${\mathcal L}$ and all its principal submatrices ${\mathcal L}^{\ell, h_\ell}$ are symmetric, we can exploit the Cauchy Interlacing Theorem \cite{Hwang01022004} 
and claim that condition \eqref{eq.mainineq.ineq} is  satisfied for all the eigenvalues $\mu$ of all the submatrices
$\tilde S_{r_i} {\mathcal L}^{\ell, h_\ell} \tilde S_{r_i}^\top$ (see Remark \ref{explaineigenvalues}) if and only if such condition is satisfied for all the eigenvalues $\mu$ of 
${\mathcal L}^{\ell, h_\ell}$.
This means that in the case of an undirected communication graph it is always convenient to resort to the first strategy, since the dimension of the distributed observers is kept at its minimum value, and at the same time the solvability condition coincides with the simpler one derived for the second strategy  (namely, \eqref{eq.mainineq.aug}).

There is no loss of generality in assuming that $\mathcal{G}$ is connected. If not, the analysis we provide in the sequel needs to be applied to each single connected component (and for each such component Assumptions \ref{ass.jointdet}
and \ref{ass.obs} must hold).
As a consequence, the Laplacian  ${\mathcal L}$ is an irreducible symmetric matrix, and since Assumption \ref{ass.jointdet} ensures that 
for every $(\ell, h_\ell)\in [1,r_u]\times [1, g_\ell]$ the set
 $\mathcal{V}_3^{\ell,h_\ell}$ is not empty, it follows that each proper principal submatrix ${\mathcal L}^{\ell, h_\ell}$ of ${\mathcal L}$ has all positive real eigenvalues.
This allows to further 
simplify condition
 \eqref{eq.mainineq.aug}. 
 \medskip

 \begin{theorem} \label{th.main.und}
Consider a system described by \eqref{eq.sys} and \eqref{eq.out}, with undirected communication graph $\mathcal{G}=(\mathcal{V},\mathcal{E}, \mathcal{A})$.  Let Assumptions \ref{ass.jointdet} and \ref{ass.obs} hold. There exists a distributed observer of the form \eqref{eq.luenberger} and \eqref{eq.estimate.undetectble.reduced}, for every $i\in [1,N]$, if and only if  for all $(\ell,h_\ell)\in [1,r_u]\times [1, g_\ell]$  
    \begin{align}\label{eq.ineq.kset}
\frac{\mu_{M}^{\ell, h_\ell}}{\mu_{m}^{\ell, h_\ell}}
< 
\frac{|\lambda_\ell| +1}{|\lambda_\ell|-1},
\end{align}
where $0 < \mu_{m}^{\ell, h_\ell} \le \mu_{M}^{\ell, h_\ell}$ are the minimum and maximum (real) eigenvalues of $\mathcal{L}^{\ell,h_\ell}$.
If so, for each $(\ell,h_\ell)\in [1,r_u]\times [1, g_\ell]$  the coupling gain $k^{\ell,h_\ell}\in\mathbb{R}$ must be chosen so that
    \begin{align}\label{eq.kset}
k^{\ell, h_\ell}\in\left(\frac{1}{\mu_{m}^{\ell, h_\ell}}\left(1-\frac{1}{|\lambda_\ell|}\right),\frac{1}{\mu_{M}^{\ell, h_\ell}}\left(1+\frac{1}{|\lambda_\ell|}\right)\right).
\end{align}
\end{theorem}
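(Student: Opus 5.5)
The plan is to reduce the statement to Theorem \ref{thmain} and then solve the resulting scalar inequalities explicitly, using the symmetry of ${\mathcal L}$.

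\textbf{Step 1: reduction to the eigenvalues of ${\mathcal L}^{\ell,h_\ell}$.}
By Theorem \ref{thmain}, a distributed observer of the form \eqref{eq.luenberger}--\eqref{eq.estimate.undetectble.reduced} exists if and only if, for each $(\ell,h_\ell)$, some $k^{\ell,h_\ell}\in\mathbb{R}$ satisfies \eqref{eq.mainineq.ineq}. The inequality must hold for every $\mu$ in the spectrum of $S^{\ell,h_\ell}(\mathcal{L}^{\ell,h_\ell}\otimes I)(S^{\ell,h_\ell})^\top$.

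By Remark \ref{explaineigenvalues}, this spectrum is the union of the spectra of the principal submatrices $\tilde S_{r_i}{\mathcal L}^{\ell,h_\ell}\tilde S_{r_i}^\top$, and $\tilde S_{r_1}=I$. Since ${\mathcal L}$ is symmetric, each of these submatrices is symmetric. The Cauchy Interlacing Theorem then places all their eigenvalues in $[\mu_m^{\ell,h_\ell},\mu_M^{\ell,h_\ell}]$.

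We also know that ${\mathcal L}^{\ell,h_\ell}$ is a proper principal submatrix of an irreducible symmetric Laplacian, because ${\mathcal V}_3^{\ell,h_\ell}\neq\emptyset$ by the Proposition. Hence ${\mathcal L}^{\ell,h_\ell}$ is positive definite and $\mu_m^{\ell,h_\ell}>0$; this was already argued just before the theorem. The map $\mu\mapsto|1-k\mu|$ is convex, so its maximum over the interval $[\mu_m,\mu_M]$ is attained at an endpoint. Combining these facts, condition \eqref{eq.mainineq} is equivalent to
\[
|1-k\mu_m|<1/|\lambda_\ell|, \qquad |1-k\mu_M|<1/|\lambda_\ell|,
\]
and both endpoints are genuine eigenvalues of ${\mathcal L}^{\ell,h_\ell}$.

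\textbf{Step 2: solving the scalar inequalities.}
Write $\rho\doteq1/|\lambda_\ell|\in(0,1]$. For $\mu>0$, the condition $|1-k\mu|<\rho$ is equivalent to
\[
k\in\Big(\tfrac{1-\rho}{\mu},\,\tfrac{1+\rho}{\mu}\Big).
\]
Intersecting the two intervals for $\mu_m\le\mu_M$ gives
\[
\max\Big\{\tfrac{1-\rho}{\mu_m},\tfrac{1-\rho}{\mu_M}\Big\}=\tfrac{1-\rho}{\mu_m},
\qquad
\min\Big\{\tfrac{1+\rho}{\mu_m},\tfrac{1+\rho}{\mu_M}\Big\}=\tfrac{1+\rho}{\mu_M},
\]
using $1-\rho\ge0$ and $1+\rho>0$. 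This is exactly the interval \eqref{eq.kset}.

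The interval is nonempty if and only if $(1-\rho)\mu_M<(1+\rho)\mu_m$. When $|\lambda_\ell|>1$, this rearranges to \eqref{eq.ineq.kset}. When $|\lambda_\ell|=1$, we have $\rho=1$: the right-hand side of \eqref{eq.ineq.kset} is $+\infty$ and the interval $(0,2/\mu_M)$ is always nonempty, so the two statements agree under the natural convention.

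\textbf{Step 3: expected obstacle.}
The main point needing care is Step 1, namely justifying that the extra eigenvalues coming from the submatrices $\tilde S_{r_i}{\mathcal L}^{\ell,h_\ell}\tilde S_{r_i}^\top$ impose no new constraint. Interlacing settles this, since an eigenvalue of a principal submatrix of a symmetric matrix lies between the extreme eigenvalues of the whole matrix.

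A secondary point is that necessity also uses only $\mu_m$ and $\mu_M$. This holds because both are eigenvalues of ${\mathcal L}^{\ell,h_\ell}$ itself, which is included via $\tilde S_{r_1}=I$.

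The remaining work is a short case check for $|\lambda_\ell|=1$, plus the remark that the pairs $(\ell,h_\ell)$ with ${\mathcal V}_3^{\ell,h_\ell}=[1,N]$ need no gain at all.
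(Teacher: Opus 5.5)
Your proof is correct and follows essentially the same route as the paper (via Corollary~\ref{cor.und}). Cauchy interlacing shows that the eigenvalues of the principal submatrices $\tilde S_{r_i}\mathcal{L}^{\ell,h_\ell}\tilde S_{r_i}^\top$ add no constraint, and the scalar inequalities at $\mu_m^{\ell,h_\ell}$ and $\mu_M^{\ell,h_\ell}$ then give the interval \eqref{eq.kset} and its nonemptiness condition \eqref{eq.ineq.kset}. The differences are cosmetic: the paper applies interlacing to $\Gamma=I-k\mathcal{L}^{\ell,h_\ell}$ inside the block-triangular form of $M$, whereas you apply it to $\mathcal{L}^{\ell,h_\ell}$ and use convexity of $\mu\mapsto|1-k\mu|$; you also make explicit the $|\lambda_\ell|=1$ case, which the paper leaves implicit.
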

\begin{proof} To prove this result it is sufficient to prove that, when ${\mathcal G}$ is undirected and connected, condition \eqref{eq.mainineq}
in Theorem \ref{thmain} becomes condition \eqref{eq.kset}, which makes sense (meaning that the interval is nonempty) if and only if 
\eqref{eq.ineq.kset} holds.
But this result follows from Corollary \ref{cor.und} in the Appendix.
\end{proof}

Note that
if the graph ${\mathcal G}$ is undirected, in order to verify if the problem is solvable
we need to check a single
condition, namely  \eqref{eq.ineq.kset}, for each $(\ell,h_\ell)\in[1,r_u]\times [1, g_\ell]$, while when ${\mathcal G}$ is a  directed graph the conditions to check - in the worst case - are as many as   ${\rm card}({\mathcal L}^{\ell, h_\ell})$ in case of Strategy 2, and ${\rm card}(S^{\ell,h_\ell} ({\mathcal L}^{\ell, h_\ell}\otimes I_{d^{\ell, h_\ell}}) (S^{\ell,h_\ell})^\top))$ in case of Strategy 1.

\section{Numerical Example}\label{sec.example}
\begin{figure}[tp!]
  \centering
\includegraphics[width=0.18\textwidth]{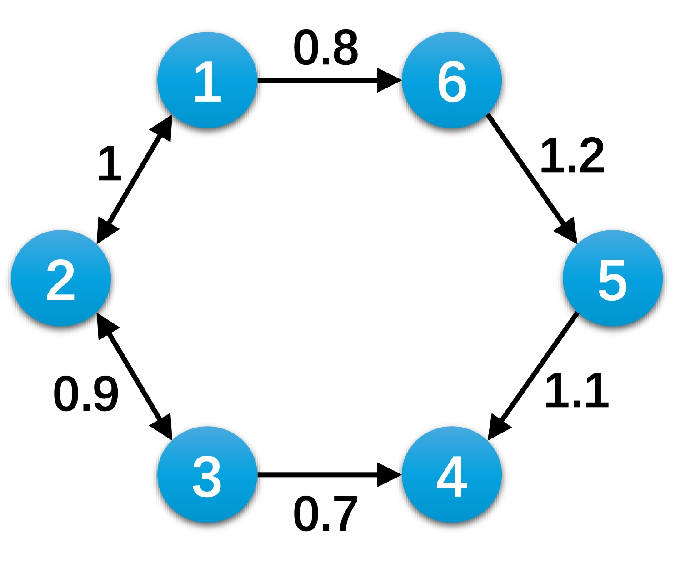}
  \caption{The communication digraph.}\label{graph}
\end{figure}

In this {\color{black} section}, an example is provided to show that the effectiveness of the proposed strategies. 
{\color{black} Consider  a $9$-dimensional  system described as in \eqref{eq.sys}, and assume  that the system matrix $A$ is in Jordan form \eqref{Jform}-\eqref{Jblock}-\eqref{Jminiblock},
with three Jordan miniblocks corresponding to a single  (unstable) eigenvalue $\lambda_1$ (and hence $A = A^1$),  i.e.,}
\begin{equation*}
\renewcommand{\arraystretch}{1}
A^{1,1}=\scriptsize\begin{bmatrix}1&1&0\\0&1&1\\0&0&1\end{bmatrix},
A^{1,2}=\begin{bmatrix}1&0\\0&1\end{bmatrix},
A^{1,3}=\begin{bmatrix}1&1&0&0\\0&1&1&0\\0&0&1&1\\0&0&0&1\end{bmatrix}.
\end{equation*}
%%%%%%%%%%%%%%%
%%%%%%%%%%%%%%%
The known matrix $B$ is omitted, as it does not affect the estimation results. Suppose that the system state $x(t)$ is estimated {\color{black} by a network of $6$ sensor nodes} connected through the directed graph in Fig. \ref{graph}, where the weights of the edges are given. According to the block-partition of $A$, the output matrices $C_i$, $i\in[1,6]$, are in the form of $[C_i^{1,1}\ C_i^{1,2}\ C_i^{1,3}]$ with
\begin{equation*}
\renewcommand{\arraystretch}{1}
\begin{split}
C_1^{1,1}=\scriptsize\begin{bmatrix}1&0&0\\0&0&0\\0&0&0\end{bmatrix},
C_1^{1,2}=\scriptsize\begin{bmatrix}0&0\\0&1\\0&0\end{bmatrix},
C_1^{1,3}=\scriptsize\begin{bmatrix}0&0&0&0\\0&0&0&0\\0&0&0&0\end{bmatrix},\\
C_2^{1,1}=\scriptsize\begin{bmatrix}0&1&0\\0&0&0\\0&0&0\end{bmatrix},
C_2^{1,2}=\scriptsize\begin{bmatrix}0&0\\1&0\\0&0\end{bmatrix},
C_2^{1,3}=\scriptsize\begin{bmatrix}0&0&0&0\\0&0&0&0\\0&1&0&0\end{bmatrix},\\
C_3^{1,1}=\scriptsize\begin{bmatrix}0&0&0\\0&0&0\\0&0&0\end{bmatrix},
C_3^{1,2}=\scriptsize\begin{bmatrix}0&0\\0&1\\0&0\end{bmatrix},
C_3^{1,3}=\scriptsize\begin{bmatrix}0&0&0&0\\0&0&0&0\\1&0&0&0\end{bmatrix},\\
C_4^{1,1}=\scriptsize\begin{bmatrix}0&0&1\\0&0&0\\0&0&0\end{bmatrix},
C_4^{1,2}=\scriptsize\begin{bmatrix}0&0\\0&0\\0&0\end{bmatrix},
C_4^{1,3}=\scriptsize\begin{bmatrix}0&0&0&0\\0&0&0&0\\0&0&1&0\end{bmatrix},\\
C_5^{1,1}=\scriptsize\begin{bmatrix}0&0&0\\0&0&0\\0&0&0\end{bmatrix},
C_5^{1,2}=\scriptsize\begin{bmatrix}0&0\\1&0\\0&0\end{bmatrix},
C_5^{1,3}=\scriptsize\begin{bmatrix}0&0&0&0\\0&0&0&0\\0&0&0&0\end{bmatrix},\\
C_6^{1,1}=\scriptsize\begin{bmatrix}1&0&0\\0&0&0\\0&0&0\end{bmatrix},
C_6^{1,2}=\scriptsize\begin{bmatrix}0&0\\0&1\\0&0\end{bmatrix},
C_6^{1,3}=\scriptsize\begin{bmatrix}0&0&0&0\\0&0&0&0\\1&0&0&0\end{bmatrix}.
\end{split}
\end{equation*}
%%%%%%%%%%%%%%%%%
%%%%%%%%%%%%%%%% 
{\color{black}The sets
$\mathcal{V}_{k}^{1,h_1}$, with $k\in [1,3]$ and $h_1\in[1,3]$, $i\in[1,6]$,} defined in \eqref{eq:defVlhl} are identified as $\mathcal{V}_1^{1,1}=\{3,5\}$, $\mathcal{V}_2^{1,1}=\{2,4\}$, $\mathcal{V}_3^{1,1}=\{1,6\}$,  $\mathcal{V}_1^{1,2}=\{4\}$, $\mathcal{V}_2^{1,2}=\{1,3,6\}$, $\mathcal{V}_3^{1,2}=\{2,5\}$, $\mathcal{V}_1^{1,3}=\{1,5\}$, $\mathcal{V}_2^{1,3}=\{2,4\}$, $\mathcal{V}_3^{3,6}=\{1,5\}$. 
 %%%%%%%%%%
Then we  obtain the matrices $S_i^{1,h_1}$, $ i\in\mathcal{V}_1^{1,h_1}\cup\mathcal{V}_2^{1,h_1}$, $h_1\in[1,3]$   as 
$S_1^{1,2}=[1\ 0]$, $S_1^{1,3}=I_4$,  $S_2^{1,1}=[1\ \mathbb{0}_{1\times 2}]$, 
$S_2^{1,3}=[1\ \mathbb{0}_{1\times 3}]$, $S_3^{1,1}=I_3$, $S_3^{1,2}= [1\ 0]$, 
$S_4^{1,1}=[I_2\ \mathbb{0}_{2\times 1}]$, $S_4^{1,2}=  I_2$, 
$S_4^{1,3}=[I_2\ \mathbb{0}_{2\times 2}]$, $S_5^{1,1}= I_3$, 
$S_5^{1,3}= I_4$,  $S_6^{1,2}=[1\ 0]$, and   $\mathcal{L}^{1,h_1}$,  $h_1\in[1,3]$  as follows
\begin{equation*}
\renewcommand{\arraystretch}{1}
\begin{split}
\mathcal{L}^{1,1}=&\ \scriptsize\begin{bmatrix}
1.9& -0.9&  0 & 0\\
-0.9& 0.9&  0 & 0\\
 0 & -0.7&1.8&  -1.1\\
0 & 0 & 0 & 1.2\end{bmatrix}
 \mathcal{L}^{1,2}= \scriptsize\begin{bmatrix}
1 &  0  & 0 & 0\\
0 &  0.9& 0  &  0\\
0 & -0.7&1.8&0\\
-0.8&0&0&0.8\end{bmatrix}\\
\mathcal{L}^{1,3}=&\ \scriptsize\begin{bmatrix}
1& -1& 0&0\\
 -1& 1.9& 0&0\\
 0& 0&1.8&-1.1\\
 0 & 0&0&1.2\end{bmatrix}.
 \end{split}
\end{equation*}
%%%%%%%%%%%%
%%%%%%%%%%%
\textbf{Strategy 1.} Following the procedure in Section \ref{sec.strategy1}, we need to first obtain $F_{i,d}$ and $H_{i,d}$, $i\in[1,6]$ in \eqref{eq.luenberger}. From $A^{1,h_1}$ and $C_i^{1,h_1}$, we have \\
$F_{1,d}=\mathrm{diag}\{A_{1,o}^{1,2},A^{1,1}\}$,
$H_{1,d}=[C_{1,o}^{1,2}\ C_1^{1,1}]$, 
$F_{2,d}=\mathrm{diag}\{A_{2,o}^{1,1},A_{2,o}^{1,3},A^{1,2}\}$,
$H_{2,d}=[C_{2,o}^{1,1}\ C_{2,o}^{1,3}\ C_2^{1,2}]$, 
$F_{3,d}=\mathrm{diag}\{A_{3,o}^{1,2},A^{1,3}\}$,
$H_{3,d}=[C_{3,o}^{1,2}\ C_3^{1,3}]$, 
$F_{4,d}=\mathrm{diag}\{A_{4,o}^{1,1},$ $A_{4,o}^{1,3}\}$,
$H_{4,d}=[C_{4,o}^{1,1}\ C_{4,o}^{1,3}]$, 
$F_{5,d}=A^{1,2}$, $H_{5,d}=C_5^{1,2}$,
$F_{6,d}=\mathrm{diag}\{A_{6,o}^{1,2},A^{1,1},A^{1,3}\}$,
$H_{6,d}=[C_{6,o}^{1,2}\ C_6^{1,1}\ C_6^{1,3}]$, \\
where \\
$A_{1,o}^{1,2}=1$, $C_{1,o}^{1,2}=[0;1;0]$, 
$A_{2,o}^{1,1}=[1\ 1;0\ 1]$, $C_{2,o}^{1,1}=[1\ 0;0\ 0;0\ 0]$, 
$A_{2,o}^{1,3}=[1\ 1\ 0;0\ 1\ 1;0\ 0\ 1]$, $C_{2,o}^{1,3}=[0\ 0\ 0;0\ 0\ 0;1\ 0\ 0]$, 
$A_{3,o}^{1,2}=1$, $C_{3,o}^{1,2}=[0;1;0]$, 
$A_{4,o}^{1,1}=1$, $C_{4,o}^{1,1}=[1;0;0]$, 
$A_{4,o}^{1,3}=[1\ 1;0\ 1]$, $C_{4,o}^{1,3}=[0\ 0;0\ 0;1\ 0]$, 
$A_{6,o}^{1,2}=1$, $C_{6,o}^{1,2}=[0;1;0]$. 
Then the following design matrix parameters  $L_{i,d}$, $i\in[1,6]$, in \eqref{eq.luenberger}  are selected to ensure that  $F_{i,d} - L_{i,d} H_{i,d}$ is Schur stable
\begin{equation*}
\renewcommand{\arraystretch}{1}
\begin{split}
L_{1d}=&\ \begin{bmatrix}
0.001&   0.3526    &     0\\
    1.0474  &  3.9058     &    0\\
    0.3444  &  2.7145     &    0\\
    0.0351 &   0.4125    &     0
\end{bmatrix}\!\!, \
 L_{3d}=  \begin{bmatrix}
 0  &   0.35&    0\\
         0  &   2.2326   &  1.4\\
         0  &   2.3361  &   0.71\\
         0  &   0.7692 &    0.154\\
         0  &   0.0789  & 
           0.012
         \end{bmatrix},\\
L_{4d}=&\  \begin{bmatrix}
         0.3005  & 0   & 0.0012\\
    0.1047   &      0    &0.5995\\
    0.0274   &      0   & 0.08
    \end{bmatrix}\!\!, \
   L_{5d}=  \begin{bmatrix}
0    &  0.7 &      0\\
         0   &   0.12 &       0
         \end{bmatrix},\\
 L_{2d}=&\ \begin{bmatrix}
0.5496 &   0&  0.0041\\
    0.071&    0&  0.0011\\
    0.2481&  0&   1.0404\\
    0.1616&0&  0.3406\\
    0.0244  &0&  0.0351\\
    0 &   0.7&0  \\
    0& 0.12&0
\end{bmatrix}\!\!, \
 L_{6d}=\scriptsize \begin{bmatrix}
 0.0001  &  0.3107   &0\\
    0.9169  &  2.7374   & 0.0032\\
    0.2725  &  1.7039   & 0.0019\\
    0.0262  &  0.2503  &  0.0003\\
    0.2062   & 1.923   & 1.3524\\
    0.1967   & 2.417   & 0.6629\\
    0.0602  &  0.9194   & 0.1396\\
    0.0059   & 0.1061  &  0.0106
      \end{bmatrix}\!\!.
\end{split}
\end{equation*}
%%%%%%%%%%%%%
%%%%%%%%%%%%%
Next, we need to determine the design parameters  $k^{1,h_1}$, $h_1\in[1,3]$. From $S_i^{1,h_1}$ and $\mathcal{L}^{1,h_1}$, $ i\in\mathcal{V}_1^{1,h_1}\cup\mathcal{V}_2^{1,h_1}$, $h_1\in[1,3]$, we  compute {\color{black} the spectra $\sigma(S^{1, 1}\left(\mathcal{L}^{1,1}\otimes I_{d^{1,1}}\right) (S^{1, 1})^\top)=\{1.8,0.3704, 2.4296, 0.9,$ $1.2\}$, 
$\sigma(S^{1, 2}\left(\mathcal{L}^{1,2}\otimes I_{d^{1,2}}\right) (S^{1, 2})^\top)$ $=\{ 0.8,1, 1.8, 0.9\}$, and
$\sigma((S^{1, 3}\left(\mathcal{L}^{1,3}\otimes I_{d^{1,3}}\right) (S^{1, 3})^\top)= \{ 0.3534,2.5466,1,$ $1.8,1.2\}$. }
%%%%%%%%%%%%%%%%
Then, according to \eqref{eq.mainineq}, we need to  solve $5$ inequalities, $4$ inequalities, and $5$  inequalities to obtain $k^{1,1}$, $k^{1,2}$, and $k^{1,3}$, respectively. Specifically,  $0<k^{1,1}<0.8232$, $0<k^{1,2}<1.1111$, $0<k^{1,1}<0.7854$. We have set $k^{1,1}=0.5$, $k^{1,2}=1$, $k^{1,3}=0.7$. The simulation results under Strategy 1 are plotted in Figure \ref{strategy1}.
\begin{figure}[tp!]
  \centering
 \hspace{-1pt}\includegraphics[width=0.23\textwidth]{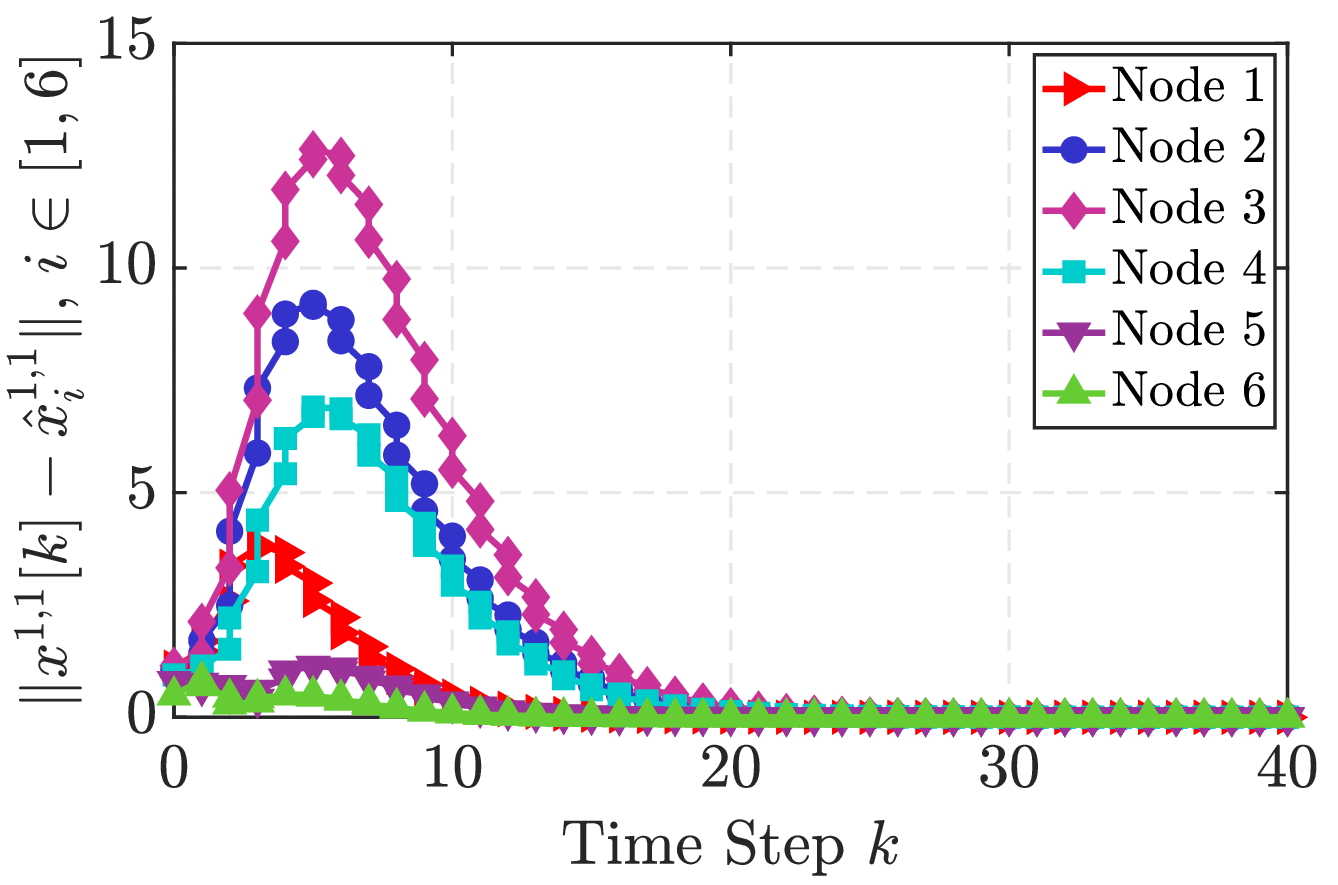}
  \hspace{-1pt}\includegraphics[width=0.23\textwidth]{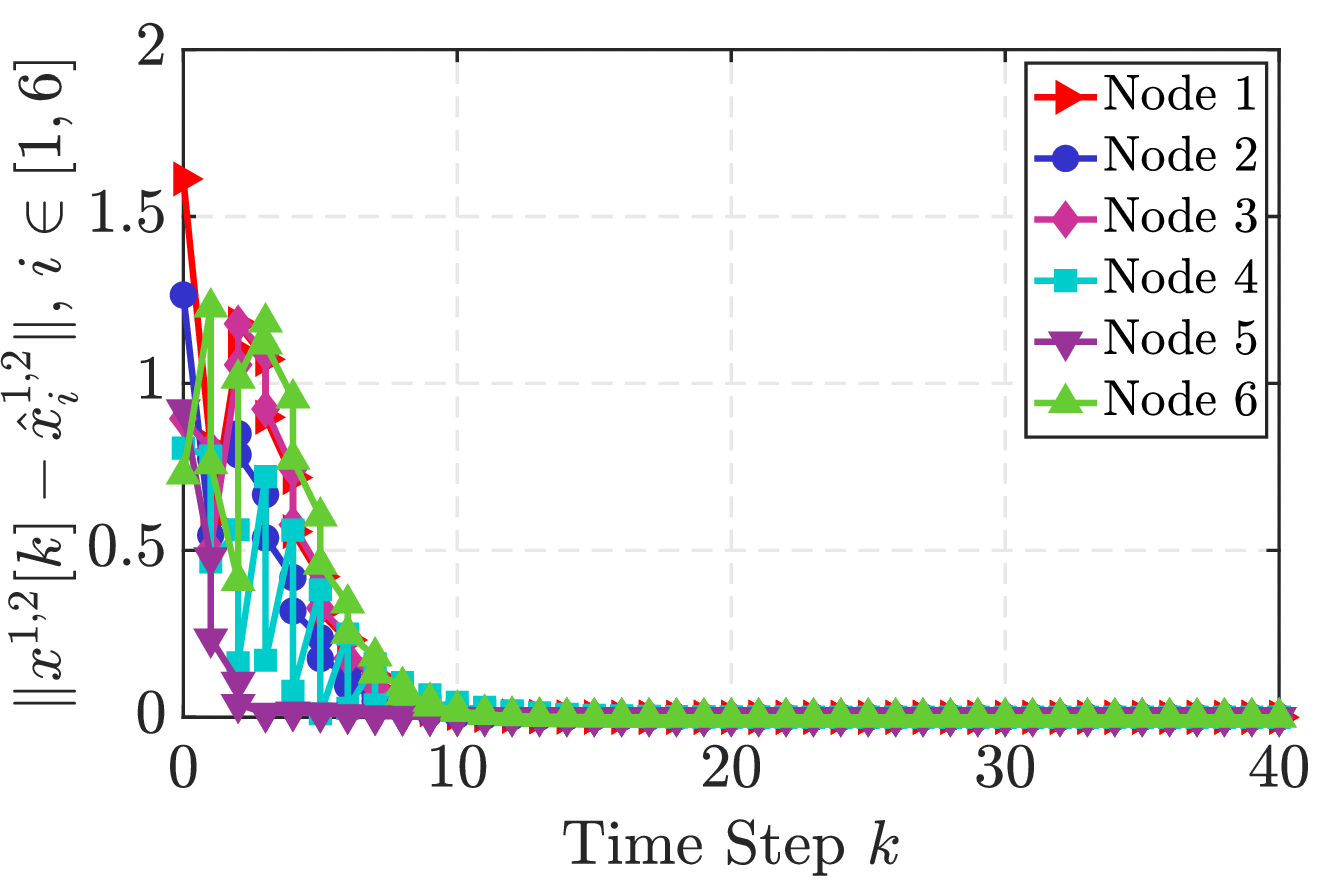}
   \hspace{-1pt}\includegraphics[width=0.23\textwidth]{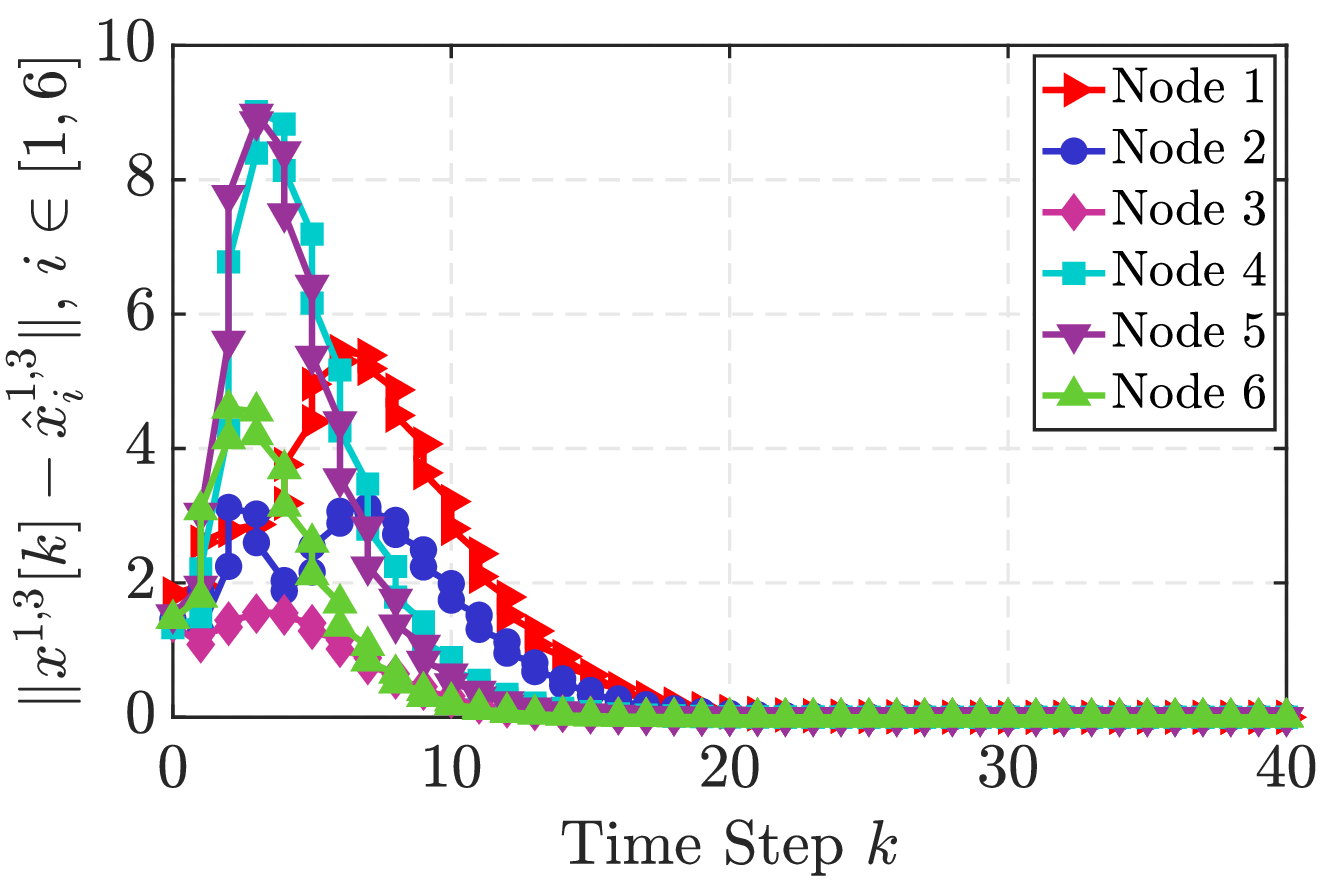}
  \caption{Norms of the estimation errors of all nodes under Strategy 1.}\label{strategy1}
\end{figure}

\textbf{Strategy 2.} {\color{black}Clearly, since the necessary and sufficient conditions for the problem solvability under Strategy 1 are satisfied, 
also those required  to apply Strategy 2 are met.
Following the procedure in Section \ref{sec.strategy2}, in order to obtain that the estimates of the substates $x_{i,d}(t)$, $i\in[1,6]$, we can  use the same Luenberger observers \eqref{eq.luenberger} derived under Strategy 1 to estimate the vectors $z_{i,d}(t)$, $i\in[1,6]$, and then single out the 
portions corresponding to miniblocks that are entirely observable for the various nodes. 

As far as the vectors  $x_{i,u}(t)$, $i\in[1,6]$, are concerned, each of them   is larger than the corresponding $z_{i,u}(t)$ adopted in Strategy 1, except for $x_{5,u}(t)$ that coincides with $z_{5,u}(t)$, since $5\not\in {\mathcal V}_2^{1,h_1}$ for any $h_1\in [1,3]$, namely no miniblock is partially observable for node $5$.
The  computation of the spectra of the Laplacians $\mathcal{L}^{1, h_1}, h_1\in [1,3],$ leads to}
$\sigma(\mathcal{L}^{1, 1})=\{1.8,0.3704, 2.4296, $ $1.2\}$, 
$\sigma(\mathcal{L}^{1, 2})= \{ 0.8,1, 1.8, 0.9\}$, 
$\sigma(\mathcal{L}^{1, 3})= \{ 0.3534,$ $2.5466, 1.8,1.2\}$. 
It follows from \eqref{eq.mainineq.aug} that for each $h_1\in[1,3]$, we only need to  solve $4$ inequalities to obtain $k^{1,h_1}$. This implies that compared to Strategy 1,  the number of inequalities to be verified in Strategy 2 is reduced. 
Specifically,  $0<k^{1,1}<0.8232$, $0<k^{1,2}<1.1111$, $0<k^{1,1}<0.7854$. We have set $k^{1,1}=0.5$, $k^{1,2}=1$, $k^{1,3}=0.7$. 
The norms of the estimation errors under Strategy 2 are depicted in Figure  \ref{strategy2}. 
\begin{figure}[htp!]
  \centering
 \hspace{-1pt}\includegraphics[width=0.23\textwidth]{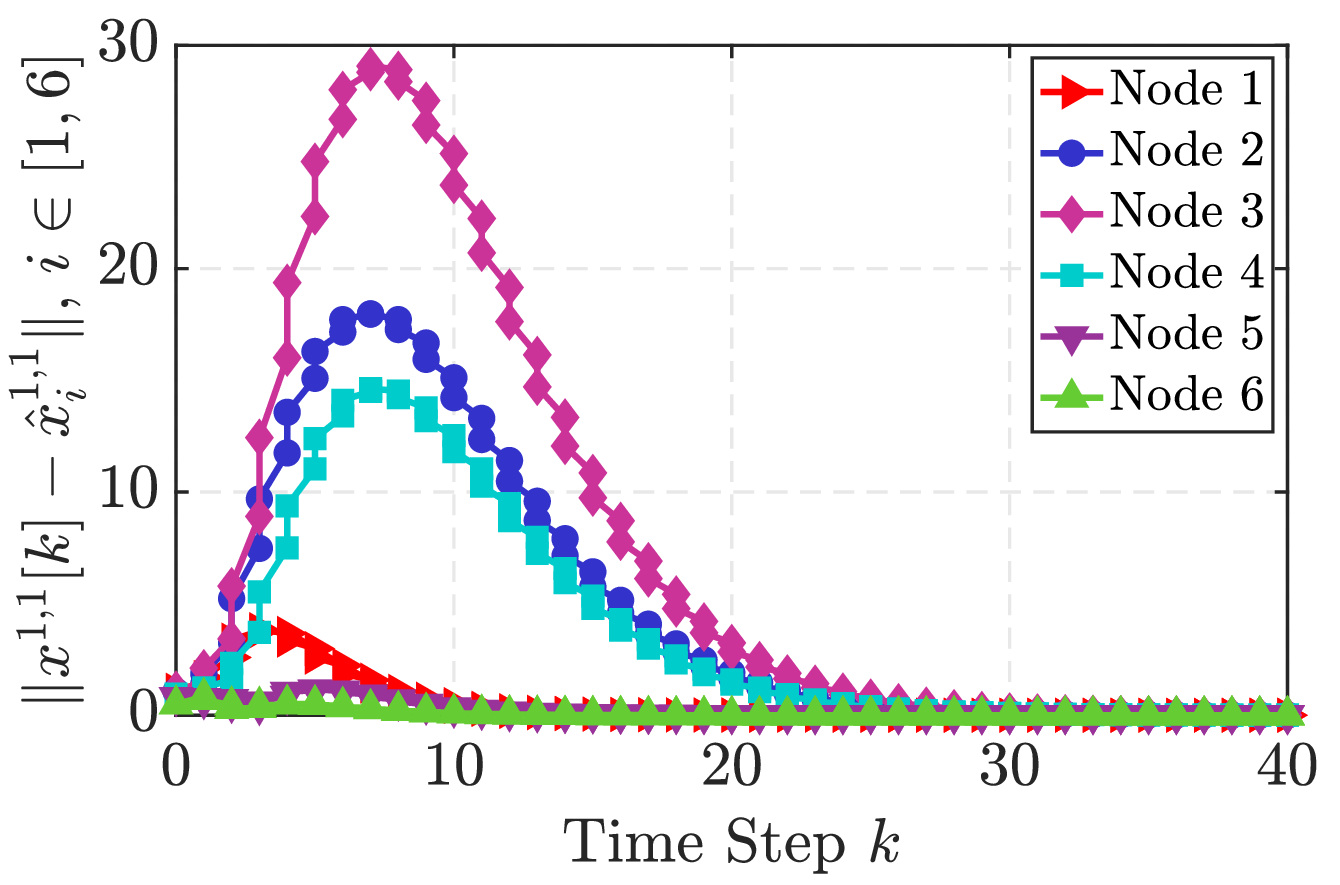}
  \hspace{-1pt}\includegraphics[width=0.23\textwidth]{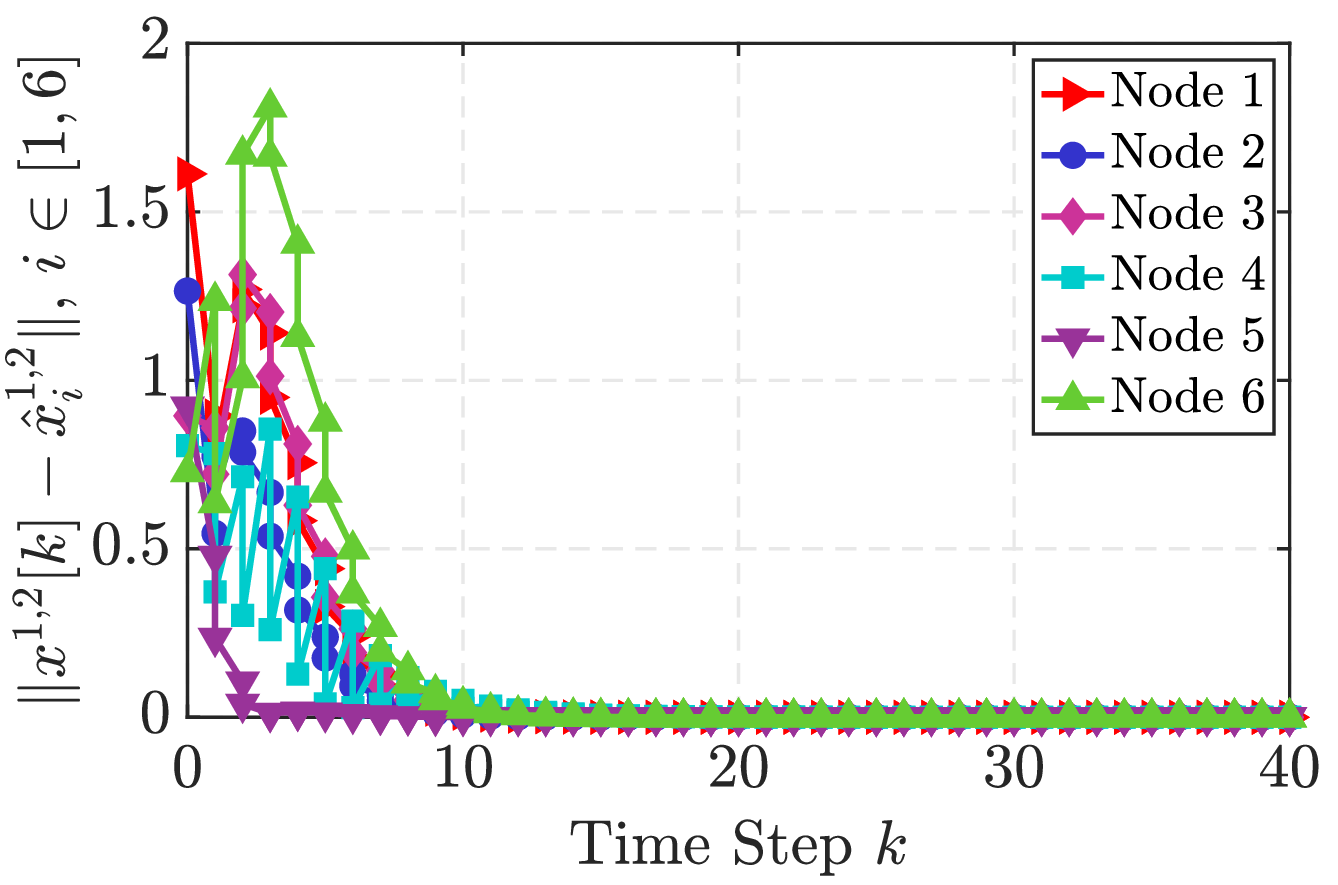}
   \hspace{-1pt}\includegraphics[width=0.23\textwidth]{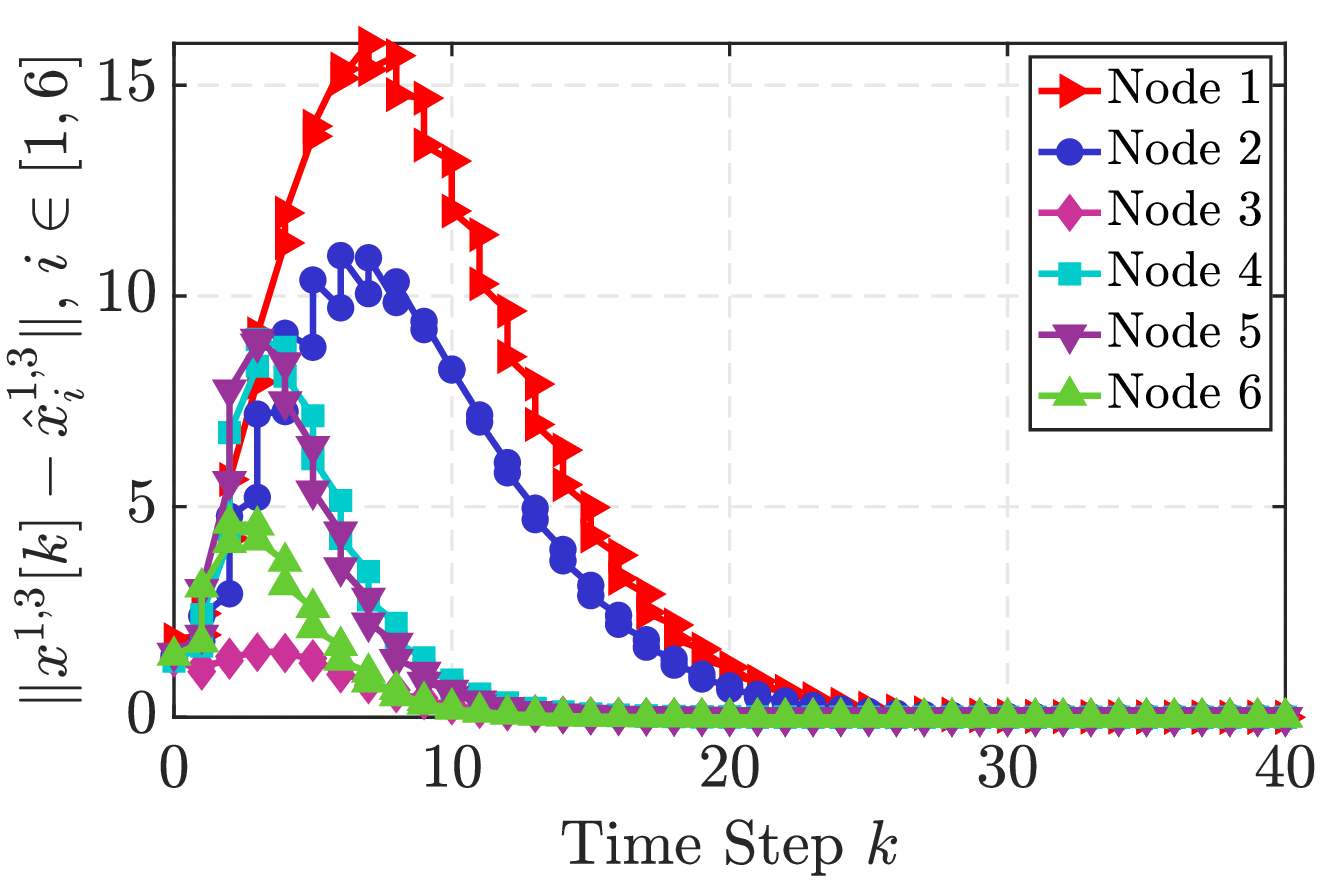}
  \caption{Norms of the estimation errors of all nodes under {\color{black}Strategy 2}.}\label{strategy2}
\end{figure}

\section{Conclusions}\label{sec.conclusion}
In this paper, we have proposed two strategies for distributed state estimation  that leverage both the Jordan canonical form of the system matrix $A$ and the  Kalman detectability form of each pair $(A,C_i)$. The first strategy proposes local observers whose dimension matches that of the original state vector, 
but imposes stringent solvability conditions. The second scheme, in contrast, can be implemented under weaker requirements, but results in observers of increased (augmented) order.
For both methods, we have derived necessary and sufficient conditions for their feasibility, expressed in terms
of the same set of inequalities but imposed on different sets of eigenvalues (of certain principal submatrices of the network Laplacian).  Compared to \cite{GaoYang}, the proposed approaches provide greater flexibility in the choice of coupling gains and impose less restrictive solvability conditions.

In the special case of an undirected communication graph, the feasibility conditions of the first strategy coincide with the simpler ones (see \eqref{eq.mainineq.aug}) derived for the second strategy. Hence, the first strategy is always preferable, as it leads to non-augmented observers.

Finally, we note that both proposed strategies rely on local coordinate permutations, and the gain matrices $L_{i,d}$ can be designed locally. In contrast, the coupling gains $k^{\ell, h_\ell}$ must be determined at a global level.

\appendix
\section*{Appendix}
\renewcommand{\thetheorem}{A.\arabic{theorem}} 
\setcounter{theorem}{0}
\setcounter{equation}{0}
\renewcommand{\theequation}{A.\arabic{equation}}

\begin{lemma}
\label{lem.1}
Given a Jordan miniblock $A\in {\mathbb R}^{d\times d}$ corresponding to some eigenvalue $\lambda\in {\mathbb R}, |\lambda|\ge 1$, a   matrix $L \in {\mathbb R}^{\bar N\times \bar N}$, a  matrix $S\doteq{\rm diag}\{S_i\}_{i\in[1,\bar N]}$, where $S_i \doteq \begin{bmatrix} I_{n_i} & {\mathbb 0}\end{bmatrix} \in {\mathbb R}^{n_i \times d}$, $1 \le n_i \le d$, is a selection matrix, and a scalar $k\in {\mathbb R}$, the following facts are equivalent.
\begin{itemize}
    \item[i)] The matrix 
$M\doteq
S  
\left(\left(I_{\bar N} - k L\right)\otimes A \right)
S^\top$ is Schur, i.e., all its   eigenvalues   have moduli smaller than $1$.
    \item[ii)] 
    For every $\mu \in\sigma(S(L\otimes I_d)S^\top)$, one has
    \begin{equation}
        \label{eq.diseq}
    |1 - k \mu|<\frac{1}{|\lambda|}.
    \end{equation}
\end{itemize}
\end{lemma}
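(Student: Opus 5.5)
The plan is to reorder the coordinates of $M$ so that they are grouped by position along the Jordan chain instead of by agent. In that ordering $M$ becomes block upper triangular, and its diagonal blocks are exactly $\lambda$ times the diagonal blocks of the same reordering of $S(L\otimes I_d)S^\top$. Write $A=\lambda I_d+J$, where $J$ is the nilpotent upper shift. The coordinates of $(\mathbb{R}^d)^{\bar N}$ are indexed by pairs $(i,r)$ with $i\in[1,\bar N]$ (agent) and $r\in[1,d]$ (position in the miniblock). The selection $S$ keeps exactly the pairs with $r\le n_i$.

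\textbf{Step 1: permute.} Let $\Pi$ be the perfect-shuffle permutation with $\Pi^\top(X\otimes A)\Pi=A\otimes X$. Apply it, and then keep only the surviving coordinates listed in $r$-order. For each $r\in[1,d]$, let $\tilde S_r$ be the selection matrix whose rows are the vectors $\mathbb{e}_i^\top$ with $n_i\ge r$. Since $n_i\ge 1$, we have $\tilde S_1=I_{\bar N}$, and the sets selected by the $\tilde S_r$ are nested and decreasing in $r$. After this reordering, $M$ is similar (by a permutation) to the block matrix $\widetilde M$ with blocks
\[
\widetilde M_{r,r'}=[A]_{r,r'}\,\tilde S_r (I_{\bar N}-kL)\tilde S_{r'}^\top ,\qquad r,r'\in[1,d].
\]
The same reordering turns $S(L\otimes I_d)S^\top$ into the block diagonal matrix ${\rm diag}\{\tilde S_r L\tilde S_r^\top\}_{r\in[1,d]}$, because $[I_d]_{r,r'}=\delta_{r,r'}$.

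\textbf{Step 2: triangularity.} The matrix $A$ is upper triangular, so $[A]_{r,r'}=0$ whenever $r'<r$. Hence $\widetilde M$ is block upper triangular. Its diagonal blocks are $\lambda\,\tilde S_r(I-kL)\tilde S_r^\top=\lambda\,(I-k\,\tilde S_rL\tilde S_r^\top)$, using $\tilde S_r\tilde S_r^\top=I$. Consequently
\[
\sigma(M)=\bigcup_{r=1}^d\{\lambda(1-k\mu):\ \mu\in\sigma(\tilde S_rL\tilde S_r^\top)\}=\{\lambda(1-k\mu):\ \mu\in\sigma(S(L\otimes I_d)S^\top)\}.
\]
This also justifies Remark~\ref{explaineigenvalues}.

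\textbf{Step 3: conclude.} By Step 2, $M$ is Schur if and only if $|\lambda|\,|1-k\mu|<1$ for every $\mu\in\sigma(S(L\otimes I_d)S^\top)$. Since $|\lambda|\ge1>0$, this is exactly \eqref{eq.diseq}. The $\mu$ may be complex, because $L$ need not be symmetric, but the modulus argument is unaffected.

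The only delicate point is the bookkeeping in Step 1. One must check that restricting the permuted Kronecker product to the surviving index pairs really yields the blocks $\tilde S_r(\cdot)\tilde S_{r'}^\top$. It also has to be verified that the order of the kept coordinates within each $r$-group is the same for $M$ and for $S(L\otimes I_d)S^\top$, so that one permutation works for both matrices. Once that is written out carefully, the rest is immediate.
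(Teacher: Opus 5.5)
Your proof is correct and follows the paper's argument. The paper likewise picks a permutation $P$ with $PS={\rm col}\{\tilde S_{r_i}\otimes\mathbb{e}_i^\top\}_{i\in[1,d]}$, obtains a block upper triangular matrix with blocks $(\tilde S_{r_i}\Gamma\tilde S_{r_j}^\top)[A]_{i,j}$, and identifies the diagonal spectra with $\sigma\bigl(S(\Gamma\otimes\lambda I_d)S^\top\bigr)$. Defining the permutation through its action on the rows of $S$, as the paper does, settles your bookkeeping concern automatically: both $M$ and $S(L\otimes I_d)S^\top$ are then conjugated through the same factor $PS$.
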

\begin{proof}
To simplify the notation, set  $\Gamma\doteq\left(I_{\bar N} - k L\right)$.\\
 We  apply a permutation matrix $P$ to the rows of $S$ 
 in such a way that  the matrix $PMP^\top = (PS)  
\left(\left(I_{\bar N} - k L\right)\otimes A \right)
(PS)^\top$  becomes block triangular, and hence its spectrum is easy to determine. To this end, 
let us define $r_i\doteq {\rm card}(\{j\in[1,\bar N]:n_j\ge i\})$, $i\in[1,d]$, where $n_j$ is the number of rows of the block $S_j$. This is equivalent to saying that $r_i$ is the number of indices $j\in [1,\bar N]$ for which the selection matrix $S_j$ includes the row vector ${\mathbb e}_i^\top$. In general, $0\le r_i\le \bar N$ for every $i\in[1,d]$. Note that $\bar N=r_1\ge r_2\ge\cdots\ge r_d\ge0$. Moreover, $r_d\ge1$ if and only if there exists $j\in [1,\bar N]$ such that\footnote{This is the case when, if we assume $[1,\bar N]=\mathcal{V}_1 \cup\mathcal{V}_2 $, we have $\mathcal{V}_1 \ne \emptyset$. 
%Note that $r_i$ is the number of agents for which the $i$th component of the subvector $x^\ell,h_\ell$  is not detectable.
} $n_j=d$ (equivalently $S_j = I_d$). Finally, note that $\sum_{i=1}^{\bar N} n_i= \sum_{i=1}^d r_i$.
For every $i\in [1,d]$ such that $r_i>0$, let $\Tilde{S}_{r_i}\in \mathbb{R}^{r_i\times \bar N}$ be the selection matrix  
whose rows are the canonical vectors ${\mathbb e}_j^\top$ corresponding 
to the  indices in the set $\{ j\in [1,{\bar N}]: n_j \ge i\}$ (of cardinality $r_i$).  
Note that $r_1={\bar N}$, hence $\Tilde{S}_{r_1}=I_{\bar N}$. Moreover, if $r_i=r_{h}$  then $\tilde S_{r_i}= \tilde S_{r_h}$ (and this makes
the adopted notation consistent).
\\
Let  $P$ be  the permutation matrix such that
 \begin{align*}
    PS= 
    \begin{bmatrix}
           I_{\bar N}\otimes\mathbb{e}_1^\top \cr
         \Tilde{S}_{r_2}\otimes\mathbb{e}_2^\top
         \cr
         \vdots\cr
         \Tilde{S}_{r_d}\otimes\mathbb{e}_d^\top 
\end{bmatrix},
\end{align*}
and hence
$(PS)^\top =
     \begin{bmatrix}
          I_{\bar N}\otimes\mathbb{e}_1&
          \Tilde{S}_{r_2}^\top\otimes\mathbb{e}_2&
          \cdots&
          \Tilde{S}_{r_d}^\top\otimes\mathbb{e}_d 
 \end{bmatrix}.$
\\ The $(i,j)$th block of $(PS)(\Gamma\otimes A)(PS)^\top$ (of size $r_i\times r_j$) coincides with
$(\tilde S_{r_i}\otimes {\mathbb e}_i^\top) (\Gamma \otimes A) (\tilde S_{r_j}^\top\otimes {\mathbb e}_j) = 
(\Tilde{S}_{r_i} \Gamma\Tilde{S}_{r_j}^\top) \otimes (\mathbb{e}_i^\top A\mathbb{e}_j) = (\Tilde{S}_{r_i} \Gamma\Tilde{S}_{r_j}^\top) [A]_{i,j}.$
By the upper triangular structure of a Jordan miniblock, it follows that $[A]_{i,j} =0$ for $i> j$, while $[A]_{i,i}=\lambda$. 
Therefore
 the matrix $PMP^\top= (PS)(\Gamma\otimes A)(PS)^\top$ takes the following upper block triangular form:
\begin{align}\label{eq.blk.triag}
    PMP^\top= \begin{bmatrix}
   \lambda \Gamma &\star&\cdots&\star\\
    {\mathbb 0} & \lambda \Tilde{S}_{r_2} \Gamma\Tilde{S}_{r_2}^\top &\star&\vdots\\
    \vdots&\vdots&\ddots&\star\\
     {\mathbb 0}& {\mathbb 0}&\cdots& \lambda\Tilde{S}_{r_d} \Gamma\Tilde{S}_{r_d}^\top\\
    \end{bmatrix}.
\end{align}
where $\star$ denotes a entry whose value is not relevant.\\
It follows that 
 \begin{align*}
\sigma(M)&=
\sigma\left(PMP^\top\right)\\
&=\sigma\left(\begin{bmatrix}
   \lambda \Gamma &\star&\cdots&\star\\
    {\mathbb 0} & \lambda \Tilde{S}_{r_2} \Gamma\Tilde{S}_{r_2}^\top &\star&\vdots\\
    \vdots&\vdots&\ddots&\star\\
     {\mathbb 0}& {\mathbb 0}&\cdots& \lambda\Tilde{S}_{r_d} \Gamma\Tilde{S}_{r_d}^\top\\
    \end{bmatrix}\right)\\
    &=\sigma\left(\begin{bmatrix}
   \lambda \Gamma &\mathbb{0}&\cdots&\mathbb{0}\\
    {\mathbb 0} & \lambda \Tilde{S}_{r_2} \Gamma\Tilde{S}_{r_2}^\top &\mathbb{0}&\vdots\\
    \vdots&\vdots&\ddots&\mathbb{0}\\
     {\mathbb 0}& {\mathbb 0}&\cdots& \lambda\Tilde{S}_{r_d} \Gamma\Tilde{S}_{r_d}^\top\\
    \end{bmatrix}\right)\\
    &=\sigma\left(P^\top\begin{bmatrix}
   \lambda \Gamma &\mathbb{0}&\cdots&\mathbb{0}\\
    {\mathbb 0} & \lambda \Tilde{S}_{r_2} \Gamma\Tilde{S}_{r_2}^\top &\mathbb{0}&\vdots\\
    \vdots&\vdots&\ddots&\mathbb{0}\\
     {\mathbb 0}& {\mathbb 0}&\cdots& \lambda\Tilde{S}_{r_d} \Gamma\Tilde{S}_{r_d}^\top\\
    \end{bmatrix}P\right)\\
    &=\sigma\left(S\left(\Gamma\otimes\lambda I_d\right)S^\top\right)\\
    &=\sigma\left(S\left((I_{\bar N}-kL)\otimes\lambda I_d\right)S^\top\right).
\end{align*}
By the properties of the Kronecker product we have 
\begin{align*}
     S\left((I_{\bar N}-kL)\otimes\lambda I_d\right)S^\top=\lambda\left(I-k S\left(L\otimes I_d\right)S^\top\right),
\end{align*}
whose eigenvalues are $\{\alpha \in {\mathbb C}: \alpha= \lambda (1-k \mu), \exists \mu\in\sigma(S\left(L\otimes I_d\right)S^\top)\}$. Therefore $M$ is Schur if and only if $ii)$ holds.
\end{proof}

\begin{corollary}\label{cor.dir}
    Given a Jordan miniblock $A\in {\mathbb R}^{d\times d}$ corresponding to some eigenvalue $\lambda\in {\mathbb R}, |\lambda|\ge 1$, a   matrix $L \in {\mathbb R}^{{\bar N}\times {\bar N}}$, and a scalar $k\in {\mathbb R}$, the following facts are equivalent.
\begin{itemize}
    \item[i)] The matrix 
$M\doteq 
\left(\left(I_{\bar N} - k L\right)\otimes A \right)$ is Schur.
    \item[ii)] The inequality \eqref{eq.diseq} holds for every $\mu \in\sigma(L)$.
\end{itemize}
\end{corollary}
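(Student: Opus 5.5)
The plan is to obtain Corollary \ref{cor.dir} as the special case of Lemma \ref{lem.1} in which no entries are discarded by the selection matrix. Take $S_i = I_d$ for every $i\in[1,\bar N]$, i.e., $n_i = d$. This choice is admissible in Lemma \ref{lem.1}, since $S_i = \begin{bmatrix} I_{n_i} & \mathbb{0}\end{bmatrix}$ with $n_i=d$ and the zero block empty. Then $S = {\rm diag}\{I_d\}_{i\in[1,\bar N]} = I_{\bar N d}$.

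With this $S$, the matrix of item i) of Lemma \ref{lem.1} becomes
\[
S\left((I_{\bar N}-kL)\otimes A\right)S^\top = (I_{\bar N}-kL)\otimes A ,
\]
which is exactly the matrix $M$ of the corollary. Likewise, the matrix in item ii) becomes $S(L\otimes I_d)S^\top = L\otimes I_d$. By the spectral property of the Kronecker product, $\sigma(L\otimes I_d)=\sigma(L)$: the eigenvalues of $L\otimes I_d$ are the products of eigenvalues of $L$ and of $I_d$, so each $\mu\in\sigma(L)$ appears with multiplicity multiplied by $d$. Hence condition ii) of Lemma \ref{lem.1} reads exactly as \eqref{eq.diseq} for every $\mu\in\sigma(L)$. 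The equivalence i) $\Leftrightarrow$ ii) of the corollary then follows verbatim from the lemma.

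There is no real obstacle. The only point to check is that the degenerate selection matrix ($n_i=d$ for all $i$, hence $r_i=\bar N$ and $\tilde S_{r_i}=I_{\bar N}$ for all $i$) is allowed by the proof of Lemma \ref{lem.1}. It is: the block triangular form \eqref{eq.blk.triag} then has all diagonal blocks equal to $\lambda\Gamma$.

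As a sanity check, one could also argue directly. Since $A = \lambda I_d + N_0$ with $N_0$ nilpotent and upper triangular, $(I-kL)\otimes A$ is similar, via a Schur triangularization of $L$, to a block upper triangular matrix with diagonal entries $\lambda(1-k\mu)$, $\mu\in\sigma(L)$. It is therefore Schur if and only if $|\lambda||1-k\mu|<1$ for every such $\mu$, which is \eqref{eq.diseq}.
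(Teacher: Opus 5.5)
Your proposal is correct and matches the paper's proof: the paper also specializes Lemma \ref{lem.1} to $S = I_{\bar N d}$ and uses the Kronecker spectral property $\sigma(L\otimes I_d)=\sigma(L)$. Your direct check via a Schur triangularization of $L$ is also valid, though not needed.
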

\begin{proof}
        The proof   immediately follows from Lemma \ref{lem.1}, by setting $S=I_{\bar N d}$ and using the properties of the spectrum of a Kronecker product.
\end{proof}
\begin{corollary}\label{cor.und}
    Given a Jordan miniblock $A\in {\mathbb R}^{d\times d}$ corresponding to some eigenvalue $\lambda\in {\mathbb R}, |\lambda|\ge 1$, a  symmetric  and positive definite matrix $L=L^\top \in {\mathbb R}^{{\bar N}\times {\bar N}}$, 
    with  $0 < \mu_m \le \mu_M$ being the minimum and maximum (real) eigenvalues of $L$, 
    a  matrix $S\doteq{\rm diag}\{S_i\}_{i\in[1,{\bar N}]}$, where $S_i = \begin{bmatrix} I_{n_i} & {\mathbb 0}\end{bmatrix} \in {\mathbb R}^{n_i \times d}$, $1 \le n_i \le d$, 
    and a scalar $k\in {\mathbb R}$,
    the following facts are equivalent.
\begin{itemize}
    \item[i)] The matrix   
$M\doteq
S  
\left(\left(I_{\bar N} - k L\right)\otimes A \right)
S^\top$ is Schur.
    \item[ii)] 
    The matrix $\lambda (I_{\bar N} - k L)$ is Schur.
    \item[iii)]  $k\in {\mathbb R}$ satisfies 
    \begin{align*}
k\in&\left(\frac{1}{\mu_{m}}\left(1-\frac{1}{|\lambda|}\right),\frac{1}{\mu_{M}}\left(1+\frac{1}{|\lambda|}\right)\right).
    \end{align*}
    \end{itemize} 
     Therefore there exists $k\in {\mathbb R}$ such that any of the above three equivalent conditions 
 holds if and only if  
 \begin{equation}
    \frac{\mu_{M}}{\mu_{m}}  <\frac{|\lambda| + 1}{|\lambda| -1}.
    \label{eq.ineq}
    \end{equation}
\end{corollary}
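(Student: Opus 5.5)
The argument is the chain i) $\Leftrightarrow$ ii) $\Leftrightarrow$ iii), followed by a check of when the interval in iii) is nonempty. The natural route is to use Lemma \ref{lem.1}, which gives i) $\Leftrightarrow$ \eqref{eq.diseq} for every $\mu\in\sigma(S(L\otimes I_d)S^\top)$. By Remark \ref{explaineigenvalues}, equivalently by the block-triangular form \eqref{eq.blk.triag} in that proof, this spectrum is the union of $\sigma(\tilde S_{r_i} L \tilde S_{r_i}^\top)$ for $i\in[1,d]$ with $r_i>0$. Since $\tilde S_{r_1}=I_{\bar N}$, this union contains $\sigma(L)$.

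The key step uses symmetry. Each $\tilde S_{r_i} L\tilde S_{r_i}^\top$ is a principal submatrix of the symmetric positive definite $L$. By the Cauchy interlacing theorem, all its (real) eigenvalues lie in $[\mu_m,\mu_M]$. The condition $|1-k\mu|<1/|\lambda|$ describes an interval in $\mu$: the preimage of an interval under an affine map. It therefore holds for all $\mu\in[\mu_m,\mu_M]$ if and only if it holds at $\mu_m$ and $\mu_M$, hence if and only if it holds for all of $\sigma(L)$. So i) is equivalent to \eqref{eq.diseq} for $\mu\in\sigma(L)$. Since $L$ is diagonalizable with real spectrum, the eigenvalues of $\lambda(I-kL)$ are $\lambda(1-k\mu)$ with $\mu\in\sigma(L)$, and this gives ii). Corollary \ref{cor.dir} could equally be cited for the middle step.

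For ii) $\Leftrightarrow$ iii), I would write $|1-k\mu|<1/|\lambda|$ as $1-1/|\lambda|<k\mu<1+1/|\lambda|$. Because every $\mu>0$, this is $k\in\big((1-1/|\lambda|)/\mu,\,(1+1/|\lambda|)/\mu\big)$. Intersecting over $\mu\in\sigma(L)$, the binding constraints are the largest lower bound, at $\mu=\mu_m$, and the smallest upper bound, at $\mu=\mu_M$. This yields exactly the interval in iii). Finally, the interval is nonempty if and only if $(1-1/|\lambda|)/\mu_m<(1+1/|\lambda|)/\mu_M$, i.e. $\mu_M(|\lambda|-1)<\mu_m(|\lambda|+1)$. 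For $|\lambda|>1$ this is \eqref{eq.ineq}. For $|\lambda|=1$ the right-hand side of \eqref{eq.ineq} is read as $+\infty$, and the interval $(0,2/\mu_M)$ is always nonempty.

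The main obstacle is the interlacing step: making sure every eigenvalue of $S(L\otimes I_d)S^\top$ lies in $[\mu_m,\mu_M]$ and is real, so that the scalar inequality need only be checked at the extremes. This needs the structure from the proof of Lemma \ref{lem.1}, namely that these eigenvalues are those of principal submatrices of $L$, and not just a generic bound. I will also note the edge case $|\lambda|=1$ in \eqref{eq.ineq}.
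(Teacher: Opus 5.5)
Your proposal is correct and follows essentially the same route as the paper. It uses the block-triangular splitting from Lemma~\ref{lem.1} into spectra of principal submatrices, then Cauchy interlacing to reduce to $\sigma(L)$, and finally intersects the scalar constraints $1-1/|\lambda|<k\mu<1+1/|\lambda|$ at $\mu_m$ and $\mu_M$. The differences are minor: you apply interlacing to $L$ and use convexity of the admissible $\mu$-set, whereas the paper applies it directly to $\Gamma=I-kL$. Your explicit handling of $|\lambda|=1$, where the right-hand side of \eqref{eq.ineq} must be read as $+\infty$, is a small clarification the paper omits.
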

\begin{proof} $i)\Leftrightarrow ii)$\ 
   By      \eqref{eq.blk.triag} in the proof of Lemma \ref{lem.1}, the spectrum $\sigma(M)=\sigma\left(PMP^\top\right) $ can be written as 
\begin{align}
\sigma(M)&=
\sigma(\lambda \Gamma) \cup \left(\bigcup_{i=2}^d 
\sigma\left(\lambda \Tilde{S}_{r_i} \Gamma\Tilde{S}_{r_i}^\top\right)\right).
\label{eq.spectrumsplit}
\end{align}
Therefore if $M$ is Schur then $\lambda 
\Gamma = \lambda (I_N -k L)$ is Schur. On the other hand,
since $\Gamma$ is symmetric and all matrices $\Tilde{S}_{r_i} \Gamma\Tilde{S}_{r_i}^\top$ are principal submatrices of $\Gamma$, it follows from the Cauchy Interlacing Theorem \cite{Hwang01022004} that each eigenvalue $\beta$ of $\Tilde{S}_{r_i} \Gamma\Tilde{S}_{r_i}^\top$ satisfies $\min \{\alpha\in \sigma(\lambda \Gamma)\} \le \beta \le \max \{\alpha\in \sigma(\lambda \Gamma)\}$. This ensures that as soon as $\lambda \Gamma$ is Schur, then also $M$ is Schur. In this way we have proved $i)\Leftrightarrow ii)$.

$ii)\Leftrightarrow iii)$ The matrix $\lambda (I_{\bar N} -k L)$ is Schur if and only if for every $\mu\in \sigma(L)$ 
$$ |1 - k \mu|<\frac{1}{|\lambda|},$$
 or, equivalently,
for every $\mu \in \sigma(L)$
\begin{align}
 1-\frac{1}{|\lambda|}<k \mu<1+\frac{1}{|\lambda|}.
 \label{ineq.mu}
\end{align}
Since $0 <  \mu_m \le  \mu \le \mu_M$ and $1-\frac{1}{|\lambda|} \ge 0$, it follows that  $k>0$.
Moreover, inequality \eqref{ineq.mu}   holds for every $\mu \in \sigma(L)$  if and only if it holds for both $\mu_m$ and $\mu_M$, namely 
\begin{align*}
    \begin{cases}
        1-\frac{1}{|\lambda|}<k \mu_m<1+\frac{1}{|\lambda|}\\
        1-\frac{1}{|\lambda|}<k \mu_M<1+\frac{1}{|\lambda|}
    \end{cases}
\end{align*}
or, equivalently,
\begin{align}
    \begin{cases}
        \frac{1}{\mu_m} \left(1-\frac{1}{|\lambda|}\right) <k < \frac{1}{\mu_m} \left(1+\frac{1}{|\lambda|}\right)\\
         \frac{1}{\mu_M} \left(1-\frac{1}{|\lambda|}\right)<k  <\frac{1}{\mu_M} \left( 1+\frac{1}{|\lambda|}\right).
    \end{cases}
    \label{eq.2ineq}
\end{align}
But since
\begin{align*}
  \frac{1}{\mu_M}\left(1-\frac{1}{|\lambda|}\right)&\le  \frac{1}{\mu_m}\left(1-\frac{1}{|\lambda|}\right)\\
  \frac{1}{\mu_M} \left(1+\frac{1}{|\lambda|} \right)&\le\frac{1}{\mu_m}\left(1+\frac{1}{|\lambda|}\right),
\end{align*}
$k$   satisfies the  inequalities \eqref{eq.2ineq} if and only if \begin{align}\label{eq.ineq*}
\frac{1}{\mu_m}\left(1-\frac{1}{|\lambda|}\right)<  k < \frac{1}{\mu_M} \left(1+\frac{1}{|\lambda|} \right).
\end{align}
holds.

Clearly, there exists $k\in {\mathbb R}$ such that \eqref{eq.ineq*} holds (i.e., $iii)$ holds) if and only if 
\eqref{eq.ineq} holds.
\end{proof}

\bibliographystyle{model5-names} 
\bibliography{biblio}

\begin{thebibliography}{20}
\expandafter\ifx\csname natexlab\endcsname\relax\def\natexlab#1{#1}\fi
\providecommand{\bibinfo}[2]{#2}
\ifx\xfnm\relax \def\xfnm[#1]{\unskip,\space#1}\fi
%Type = Article
\bibitem[{Disarò et~al.(2025)Disarò, Fattore \& Valcher}]{Disaro_TAC2025}
\bibinfo{author}{Disarò, G.}, \bibinfo{author}{Fattore, G.}, \&
  \bibinfo{author}{Valcher, M.~E.} (\bibinfo{year}{2025}).
\newblock \bibinfo{title}{Distributed state estimation for discrete-time lti
  systems in the presence of unknown inputs}.
\newblock {\it \bibinfo{journal}{IEEE Transactions on Automatic Control}\/},
  (pp. \bibinfo{pages}{1--15}).
%Type = Inproceedings
\bibitem[{Fattore et~al.(2026)Fattore, Valcher, Gao \& Yang}]{Fattore_ECC2026}
\bibinfo{author}{Fattore, G.}, \bibinfo{author}{Valcher, M.~E.},
  \bibinfo{author}{Gao, R.}, \& \bibinfo{author}{Yang, G.-H.}
  (\bibinfo{year}{2026}).
\newblock \bibinfo{title}{Distributed state estimation of discrete-time lti
  systems via jordan canonical representation}.
\newblock \bibinfo{note}{To be presented at the 24th European Control
  Conference (ECC), July 7--10, 2026, Reykjavík, Iceland}.
%Type = Article
\bibitem[{Fioravanti et~al.(2024)Fioravanti, Makridis, Oliva, Vrakopoulou \&
  Charalambous}]{FioravantiTAC2024}
\bibinfo{author}{Fioravanti, C.}, \bibinfo{author}{Makridis, E.},
  \bibinfo{author}{Oliva, G.}, \bibinfo{author}{Vrakopoulou, M.}, \&
  \bibinfo{author}{Charalambous, T.} (\bibinfo{year}{2024}).
\newblock \bibinfo{title}{Distributed estimation and control for lti systems
  under finite-time agreement}.
\newblock {\it \bibinfo{journal}{IEEE Transactions on Automatic Control}\/},
  {\it \bibinfo{volume}{69}\/}, \bibinfo{pages}{7909--7916}.
%Type = Article
\bibitem[{Gao \& Yang(2025)}]{GaoYang}
\bibinfo{author}{Gao, R.}, \& \bibinfo{author}{Yang, G.-H.}
  (\bibinfo{year}{2025}).
\newblock \bibinfo{title}{Distributed state estimation for discrete-time linear
  systems: A canonical decomposition-based approach}.
\newblock {\it \bibinfo{journal}{IEEE Transactions on Automatic Control}\/},
  {\it \bibinfo{volume}{70}\/}, \bibinfo{pages}{4825--4832}.
%Type = Article
\bibitem[{Han et~al.(2019)Han, Trentelman, Wang \& Shen}]{Han2019}
\bibinfo{author}{Han, W.}, \bibinfo{author}{Trentelman, H.~L.},
  \bibinfo{author}{Wang, Z.}, \& \bibinfo{author}{Shen, Y.}
  (\bibinfo{year}{2019}).
\newblock \bibinfo{title}{A simple approach to distributed observer design for
  linear systems}.
\newblock {\it \bibinfo{journal}{IEEE Transactions on Automatic Control}\/},
  {\it \bibinfo{volume}{64}\/}, \bibinfo{pages}{329--336}.
%Type = Book
\bibitem[{Horn \& Johnson(1985)}]{HornJohnson}
\bibinfo{author}{Horn, R.~A.}, \& \bibinfo{author}{Johnson, C.~R.}
  (\bibinfo{year}{1985}).
\newblock {\it \bibinfo{title}{Matrix Analysis}\/}.
\newblock \bibinfo{address}{Cambridge, GB}: \bibinfo{publisher}{Cambridge Univ.
  Press}.
%Type = Article
\bibitem[{Huang et~al.(2022)Huang, Li \& Wu}]{ReviewDIstrObs}
\bibinfo{author}{Huang, S.}, \bibinfo{author}{Li, Y.}, \& \bibinfo{author}{Wu,
  J.} (\bibinfo{year}{2022}).
\newblock \bibinfo{title}{Distributed state estimation for linear
  time-invariant dynamical systems: A review of theories and algorithms}.
\newblock {\it \bibinfo{journal}{Chinese Journal of Aeronautics}\/},  {\it
  \bibinfo{volume}{35}\/}, \bibinfo{pages}{1--17}.
%Type = Article
\bibitem[{Hwang(2004)}]{Hwang01022004}
\bibinfo{author}{Hwang, S.-G.} (\bibinfo{year}{2004}).
\newblock \bibinfo{title}{Cauchy's interlace theorem for eigenvalues of
  hermitian matrices}.
\newblock {\it \bibinfo{journal}{The American Mathematical Monthly}\/},  {\it
  \bibinfo{volume}{111}\/}, \bibinfo{pages}{157--159}.
%Type = Book
\bibitem[{Kailath(1980)}]{Kailath}
\bibinfo{author}{Kailath, T.} (\bibinfo{year}{1980}).
\newblock {\it \bibinfo{title}{Linear Systems}\/}.
\newblock \bibinfo{publisher}{Prentice Hall, Inc.}
%Type = Inproceedings
\bibitem[{Kamgarpour \& Tomlin(2008)}]{Tomlin}
\bibinfo{author}{Kamgarpour, M.}, \& \bibinfo{author}{Tomlin, C.}
  (\bibinfo{year}{2008}).
\newblock \bibinfo{title}{Convergence properties of a decentralized kalman
  filter}.
\newblock In {\it \bibinfo{booktitle}{Proceedings of the 47th IEEE Conference
  on Decision and Control}\/} (pp. \bibinfo{pages}{3205--3210}).
\newblock \bibinfo{address}{Cancun, Mexico}.
%Type = Article
\bibitem[{Mitra \& Sundaram(2018)}]{MitraSundaram}
\bibinfo{author}{Mitra, A.}, \& \bibinfo{author}{Sundaram, S.}
  (\bibinfo{year}{2018}).
\newblock \bibinfo{title}{Distributed observers for lti systems}.
\newblock {\it \bibinfo{journal}{IEEE Transactions on Automatic Control}\/},
  {\it \bibinfo{volume}{63}\/}, \bibinfo{pages}{3689--3704}.
%Type = Article
\bibitem[{Mitra \& Sundaram(2019)}]{MitraSundaram2019}
\bibinfo{author}{Mitra, A.}, \& \bibinfo{author}{Sundaram, S.}
  (\bibinfo{year}{2019}).
\newblock \bibinfo{title}{Byzantine-resilient distributed observers for lti
  systems}.
\newblock {\it \bibinfo{journal}{Automatica}\/},  {\it
  \bibinfo{volume}{108}\/}, \bibinfo{pages}{108487}.
%Type = Inproceedings
\bibitem[{Olfati-Saber(2005)}]{OS_distrKalman2}
\bibinfo{author}{Olfati-Saber, R.} (\bibinfo{year}{2005}).
\newblock \bibinfo{title}{Distributed kalman filter with embedded consensus
  filters}.
\newblock In {\it \bibinfo{booktitle}{Proceedings of the 44th IEEE Conference
  on Decision and Control}\/} (pp. \bibinfo{pages}{8179--8184}).
%Type = Inproceedings
\bibitem[{Olfati-Saber(2007)}]{OS_distrKalman}
\bibinfo{author}{Olfati-Saber, R.} (\bibinfo{year}{2007}).
\newblock \bibinfo{title}{Distributed kalman filtering for sensor networks}.
\newblock In {\it \bibinfo{booktitle}{Proceedings of the 46th IEEE Conference
  on Decision and Control}\/} (pp. \bibinfo{pages}{5492--5498}).
\newblock \bibinfo{address}{New Orleans, USA}.
%Type = Inproceedings
\bibitem[{Park \& Martins(2012{\natexlab{a}})}]{Nuno1}
\bibinfo{author}{Park, S.}, \& \bibinfo{author}{Martins, N.~C.}
  (\bibinfo{year}{2012}{\natexlab{a}}).
\newblock \bibinfo{title}{An augmented observer for the distributed estimation
  problem for lti systems}.
\newblock In {\it \bibinfo{booktitle}{Proceedings of the 2012 American Control
  Conference}\/} (pp. \bibinfo{pages}{6775--6780}).
\newblock \bibinfo{address}{Montreal, Canada}.
%Type = Inproceedings
\bibitem[{Park \& Martins(2012{\natexlab{b}})}]{Nuno2}
\bibinfo{author}{Park, S.}, \& \bibinfo{author}{Martins, N.~C.}
  (\bibinfo{year}{2012}{\natexlab{b}}).
\newblock \bibinfo{title}{Necessary and sufficient conditions for the
  stabilizability of a class of lti distributed observers}.
\newblock In {\it \bibinfo{booktitle}{Proceedings of the 51st IEEE Conference
  on Decision and Control}\/} (pp. \bibinfo{pages}{7431--7436}).
\newblock \bibinfo{address}{Maui, Hawaii}.
%Type = Article
\bibitem[{Park \& Martins(2017)}]{Nuno3}
\bibinfo{author}{Park, S.}, \& \bibinfo{author}{Martins, N.~C.}
  (\bibinfo{year}{2017}).
\newblock \bibinfo{title}{Design of distributed lti observers for state
  omniscience}.
\newblock {\it \bibinfo{journal}{IEEE Transactions on Automatic Control}\/},
  {\it \bibinfo{volume}{62}\/}, \bibinfo{pages}{561--576}.
%Type = Article
\bibitem[{Rego et~al.(2019)Rego, Pascoal, Aguiar \& Jones}]{Rego19}
\bibinfo{author}{Rego, F. F.~C.}, \bibinfo{author}{Pascoal, A.~M.},
  \bibinfo{author}{Aguiar, A.~P.}, \& \bibinfo{author}{Jones, C.~N.}
  (\bibinfo{year}{2019}).
\newblock \bibinfo{title}{Distributed state estimation for discrete-time linear
  time invariant systems: A survey}.
\newblock {\it \bibinfo{journal}{Annual Reviews in Control}\/},  {\it
  \bibinfo{volume}{48}\/}, \bibinfo{pages}{36--56}.
%Type = Article
\bibitem[{Wang et~al.(2024)Wang, Liu, Anderson \& Morse}]{Wang24}
\bibinfo{author}{Wang, L.}, \bibinfo{author}{Liu, J.},
  \bibinfo{author}{Anderson, B. D.~O.}, \& \bibinfo{author}{Morse, A.~S.}
  (\bibinfo{year}{2024}).
\newblock \bibinfo{title}{Split-spectrum based distributed state estimation for
  linear systems}.
\newblock {\it \bibinfo{journal}{Automatica}\/},  {\it
  \bibinfo{volume}{161}\/}, \bibinfo{pages}{111421}.
%Type = Article
\bibitem[{Yang et~al.(2022)Yang, Barboni, Rezaee \& Parisini}]{yang2022state}
\bibinfo{author}{Yang, G.}, \bibinfo{author}{Barboni, A.},
  \bibinfo{author}{Rezaee, H.}, \& \bibinfo{author}{Parisini, T.}
  (\bibinfo{year}{2022}).
\newblock \bibinfo{title}{State estimation using a network of distributed
  observers with unknown inputs}.
\newblock {\it \bibinfo{journal}{Automatica}\/},  {\it
  \bibinfo{volume}{146}\/}, \bibinfo{pages}{110631}.

\end{thebibliography}
\end{document}